 \newtheorem{thm}{Theorem}
 \newtheorem{cor}{Corollary}[thm]
 \newtheorem{lem}{Lemma}
 \newcommand{\be}{\begin{eqnarray}}
 \newcommand{\ee}{\end{eqnarray}}
 \newcommand{\ra}{\rangle}
 \newcommand{\la}{\langle}
 \newcommand{\wt}{\widetilde}
\begin{document}

\title{Markovian and non-Markovian quantum measurements}

\author{Jennifer R. Glick}
\email{patte399@msu.edu}

\author{Christoph Adami}
\email{adami@msu.edu}

\affiliation{Department of Physics and Astronomy, Michigan State University, East Lansing, Michigan 48824, USA}

\date{\today}

\begin{abstract}Consecutive measurements performed on the same quantum system can reveal fundamental insights into quantum theory's causal structure, and probe different aspects of the quantum measurement problem. According to the Copenhagen interpretation, measurements affect the quantum system in such a way that the quantum superposition collapses after the measurement, erasing any knowledge of the prior state. We show here that counter to this view,  unamplified measurements (measurements where all variables comprising a pointer are controllable) have coherent ancilla density matrices that encode the memory of the entire set of quantum measurements, and that the quantum chain of a set of consecutive unamplified measurements is non-Markovian. In contrast, sequences of amplified measurements (measurements where at least one pointer variable has been lost) are equivalent to a quantum Markov chain. An analysis of arbitrary non-Markovian quantum chains of measurements reveals that all of the information necessary to reconstruct the chain is encoded on its boundary (the state preparation and the final measurement), reminiscent of the holographic principle.

\end{abstract}

\maketitle

\numberwithin{equation}{section}

 \section{Introduction}
 The physics of consecutive (sequential) measurements on the same quantum system has enjoyed increased attention as of late, as it probes the causal structure of quantum mechanics~\cite{Brukner2014}. It is of interest to researchers concerned about the apparent lack of time-reversal invariance of Born's rule~\cite{Rovelli2016,OreshkovCerf2015}, as well as to those developing a consistent formulation of covariant quantum mechanics~\cite{ReisenbergerRovelli2002,OlsonDowling2007}, which does not allow for a time variable to define the order of (possibly non-commuting) projections~\cite{OreshkovCerf2016}.

 Consecutive measurements can be seen to challenge our understanding of quantum theory in an altogether different manner, however. According to standard theory, a measurement causes the state of a quantum system to ``collapse'', repreparing it as an eigenstate of the measured operator so that after multiple consecutive measurements on the quantum system any information about the initial preparation is erased. However, recent investigations of sequential measurements on a single quantum system with the purpose of optimal state discrimination have already hinted that quantum information survives the collapse~\cite{Nagalietal2012,Bergouetal2013}, and that information about a chain of sequential measurements can be retrieved from the final quantum state~\cite{HilleryKoch2016}. 
 
 Here we investigate the circumstances that make chains of quantum measurements ``Markovian''---meaning that each consecutive measurement ``wipes the slate clean'' so that retrodiction of quantum states~\cite{HilleryKoch2016} is impossible---and under what conditions the quantum trajectory remains coherent so that the memory of previous measurements is preserved.

In particular, we study the {\em relative state} of measurement devices (both quantum and classical) in terms of quantum information theory, to ascertain how much information about the quantum state appears in the measurement devices, and how this information is distributed. We find that a crucial distinction refers to the ``amplifiability" of a quantum measurement, that is, whether a result is encoded in the states of a closed or an open system, and conclude that a unitary relative-state description makes predictions that are different from a formalism that assumes quantum state reduction.

 While the suggestion that the relative state description of quantum measurement~\cite{Everett1957} (see also ~\cite{Zeh1973,Deutsch1984,CerfAdami1996,CerfAdami1998,GlickAdami-bell-2017}) and the Copenhagen interpretation are at odds and may lead to measurable differences has been made before~\cite{Zeh1973,Deutsch1984}, here we frame the problem of consecutive measurements in the language of quantum information theory, which allows us to make these differences manifest.

 We begin by outlining in Sec.~\ref{sec:two} the unitary description of quantum measurement discussed previously~\cite{CerfAdami1996,CerfAdami1998,GlickAdami-bell-2017}, and apply it in Sec.~\ref{sec:three} to a sequence of quantum measurements where the pointer---meaning a set of quantum ancilla states---remains under full control of the experimenter. In such a closed system, the pointer can in principle decohere if it is composed of more than one qubit, but this decoherence can be reversed in general. We prove in Theorems~\ref{theorem-cq-prepared} and~\ref{theorem-cq-unprepared} properties of the entropy of a chain of consecutive measurements that imply that the entropy of such chains resides in the last (or first and last) measurements. We then show that for coherence to be preserved in such chains, measurements cannot be arbitrarily amplified---in contrast to the macroscopic measurement devices that are necessarily open systems.

 In Sec.~\ref{sec:markov}, we analyze sequences of amplifiable---that is, macroscopic---measurements and prove in Theorem~\ref{theorem-markov} that amplified measurement sequences are Markovian. Corollary~\ref{corollary-markov} asserts an information-theoretic statement of the general idea that two macroscopic measurements anywhere on a Markov chain must be uncorrelated given the state of all the measurement devices that separate them in the chain. This corollary epitomizes the essence of the Copenhagen idea of quantum state reduction in terms of the conditional independence of measurement devices that are not immediately in each other's past or future. It is consistent with the notion that the measurements collapsed the state of the wavefunction, erasing any conditional information that a detector could have had about prior measurements. However, no irreversible reduction occurs and all amplitudes in the underlying pure-state wavefunction continue to evolve unitarily.

 Section \ref{sec:five} unifies the two previous sections by proving three statements (Theorems~\ref{theorem-last-info},~\ref{theorem-past-info}, and~\ref{theorem-chain}) that relate information-theoretic quantities pertaining to unamplified measurements to the corresponding expressions for amplified measurements. We show that, in general, amplification leads to a loss of information.

 After a brief application of the collected concepts and results to standards such as quantum state preparation, the double-slit experiment and the Zeno effect, in Sec.~\ref{sec:six}, we close with conclusions.

 \section{Theory of Quantum Measurement}
 \label{sec:two}

 \subsection{The measurement process}
 \label{sec:prepared-states}
 
 Suppose a given quantum system is in the initial state
 \begin{equation} \label{Q-prepared}
     |Q\ra = \sum_{x_1=1}^d \alpha^{(1)}_{x_1} \, |\wt{x}_1\ra,
 \end{equation}
 where $\alpha^{(1)}_{x_1}$ are complex amplitudes. Here, $Q$ is expressed in terms of the $d$ orthonormal basis states $|\wt{x}_1\ra$ associated with the observable that we will measure. The von Neumann measurement is implemented with a unitary operator that entangles the quantum system $Q$ with an ancilla $A_1$~\footnote{We focus here on orthogonal measurements, a special case of the more general POVMs (positive operator-valued measures) that use non-orthogonal states. What follows can be extended to POVMs, while at the same time Neumark's theorem guarantees that any POVM can be realized by an orthogonal measurement in an extended Hilbert space.}, 
 \begin{equation} \label{entangling-operator}
    U_{QA_1} = \sum_{x_1=1}^d P_{x_1} \!\otimes U_{x_1},
 \end{equation} 
 where $P_{x_1} = |\wt{x}_1\ra\la \wt{x}_1|$ are projectors on the state of $Q$. The operators $U_{x_1}$ transform the initial state $|0\ra$ of the ancilla to the final state $U_{x_1} |0\ra = |x_1\ra$, where $|x_1\ra$ are the orthonormal basis states of the ancilla. The unitary interaction~\eqref{entangling-operator} between the quantum system and the ancilla leads to the entangled state~\cite{CerfAdami1998}
 \begin{equation} \label{QA_stageI}
    |QA_1\ra = U_{QA_1} \, |Q\ra \, |0\ra = \sum_{x_1} \alpha^{(1)}_{x_1} \, |\wt{x}_1\ra \, |x_1\ra.
 \end{equation}
 The coefficients $\alpha^{(1)}_{x_1}$ reflect the degree of entanglement between $Q$ and $A_1$: the number of non-zero coefficients is the Schmidt number~\cite{NielsenChuang_Book} of the Schmidt decomposition.

 Tracing over~\eqref{QA_stageI}, the marginal density matrix of $A_1$ (and similarly for $Q$) is
 \begin{equation}
     \rho(A_1) = {\rm Tr}_Q \left( |QA_1\ra\la QA_1| \right) = \sum_{x_1} |\alpha^{(1)}_{x_1}|^2 \, |x_1\ra\la x_1|.
 \end{equation}
 From the symmetry of the state~\eqref{QA_stageI}, the marginal von Neumann entropy of $A_1$ is the same as $Q$, which, in turn, is equal to the Shannon entropy of the probability distribution $q^{(1)}_{x_1} = |\alpha_{x_1}^{(1)}|^2$: 
 \begin{equation} 
    S(Q) = S(A_1) = H[q^{(1)}] = - \sum_{x_1} q^{(1)}_{x_1} \, \log_d q^{(1)}_{x_1}.
 \end{equation} 
 We denote the Shannon entropy of a $d$-dimensional probability distribution $p_{x_i}$ by $H[p] = - \sum_{x_i=1}^d p_{x_i} \log_d p_{x_i}$. The von Neumann entropy of a density matrix $\rho(X)$ is defined as $S(X) = S(\rho(X)) = - {\rm Tr} \left[ \rho(X) \log_d \rho(X) \right]$, which on account of the logarithm to the base $d$, gives entropies the units ``dits". 
 
 The ancilla and quantum system are not classically correlated in~\eqref{QA_stageI} (as is required for decoherence models), but in fact are entangled. This entanglement is characterized by a negative conditional entropy~\cite{CerfAdami_PRL1997,CerfAdami1998}, $S(A_1|Q) = S(QA_1) - S(Q) = -S(A_1)$, where the joint entropy vanishes since~\eqref{QA_stageI} is pure. We illustrate the entanglement between $A_1$ and $Q$ with an entropy Venn diagram~\cite{CerfAdami1998} in Fig.~\ref{QAvenn}(a). The mutual entropy at the center of the diagram, $S(Q:A_1) = S(Q) + S(A_1) - S(QA_1)$, reflects the entropy that is shared between both systems and is twice as large as the classical upper bound~\cite{CerfAdami_PRL1997,AdamiCerf-capacity-1997,CerfAdami1998}.

 \subsection{Unprepared quantum states}
 \label{sec:unprepared-states}
 
 In the previous section, we considered measurements of a quantum system that is prepared in a pure state~\eqref{Q-prepared} with amplitudes $\alpha^{(1)}_{x_1}$. Suppose instead that we are given a quantum system about which we have no information, that is, where no previous measurement results could inform us of the state of $Q$. In this case, we write the quantum system's initial state as a maximum entropy mixed state
 \begin{equation} 
    \rho(Q) = \frac1d \sum_{x_0=1}^d |\wt{x}_0\ra\la \wt{x}_0 |\,,
 \end{equation}
 with amplitudes that now correspond to a uniform probability distribution. We call this an {\em unprepared} quantum system. We can ``purify" $\rho(Q)$ by defining a higher-dimensional pure state where $Q$ is entangled with a reference system $R$~\cite{NielsenChuang_Book},
 \begin{equation} \label{QR}
    |QR\ra= \frac{1}{\sqrt d} \sum_{x_0=1}^d |\wt{x}_0\ra |x_0\ra\,,
 \end{equation}
 such that $\rho(Q)$ is recovered by tracing~\eqref{QR} over $R$. Here and earlier, the states of $Q$ are written with a tilde, $|\wt{x}_0\ra$, to distinguish them from the states of $R$, which are denoted by $|x_0\ra$. In this section, we assume that $Q$ is an unprepared (or ``unknown'') state with maximum entropy so that it is maximally entangled with $R$, as in~\eqref{QR}. With such an assumption, we do not bias any subsequent measurements~\cite{Wootters2006}.

 To measure $Q$ with an ancilla $A_1$, we express the quantum system in the eigenbasis $|\wt{x}_1\ra$, which corresponds to the observable that ancilla $A_1$ will measure, using the unitary matrix $U^{(1)}_{x_0x_1}=\la \wt{x}_1|\wt{x}_0\ra$. 
 
 The orthonormal basis states of the ancilla, $|x_1\ra$, with $x_1 = 1 ,\ldots, d$, automatically serve as the ``interpretation basis"~\cite{Deutsch1984}. We then entangle~\cite{CerfAdami1998} $Q$ with $A_1$, which is in the initial state $|0\ra$, using the unitary entangling operation $U_{QA_1}$ in Eq.~\eqref{entangling-operator},
 \begin{equation}
  \begin{split} 
    |QRA_1 \ra & = \mathbbm{1}_{\!R} \otimes U_{QA_1} \, |QR\ra \, |0\ra \\
    & = \frac{1}{\sqrt d} \sum_{x_0x_1} \, U^{(1)}_{x_0x_1} \, |\wt{x}_1\ra \, |x_0\ra \, |x_1\ra,
  \end{split}
 \end{equation}
 where $\mathbbm{1}_{\!R}$ is the identity operation on $R$. We always write the states on the right hand side in the same order as they appear in the ket on the left hand side. We express the reference's states in terms of the $A_1$ basis by defining $|x_1\ra_{R} = \sum_{x_0} U^{(1)\intercal}_{x_1x_0} \, |x_0\ra_{R}$ with the transpose of $U^{(1)}$, so that the joint system $QRA_1$ appears as 
 \begin{equation}\label{schmidt}
    |QRA_1\ra=\frac1{\sqrt d}\sum_{x_1}|\wt{x}_1\ra |x_1\ra |x_1\ra\;. 
 \end{equation}
 Note that (\protect\ref{schmidt}) is a tripartite Schmidt decomposition of the joint density matrix {$\rho(QRA_1) = |QRA_1\ra\la QRA_1|$}, which is possible here because the entanglement operator $U_{QA_1}$ ensures the bi-Schmidt basis {$_R\la x_1|QRA_1\ra $} has Schmidt number one~\protect\cite{Pati2000}. 
 \begin{figure} 
    \centering
    \includegraphics[width=\linewidth]{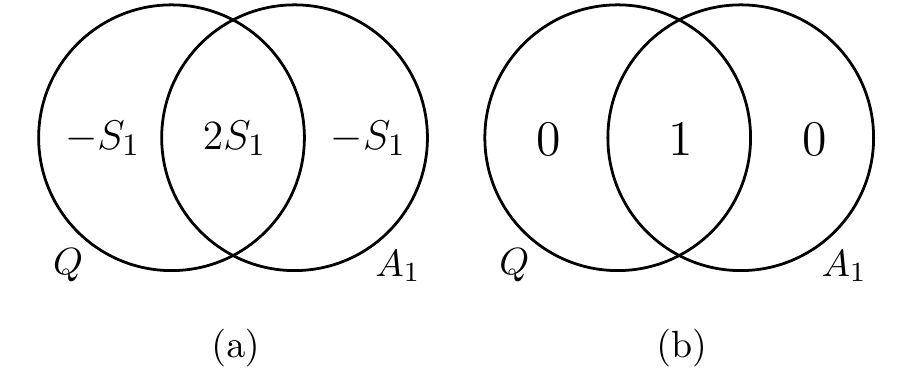} 
    \caption{Entropy Venn diagrams~\cite{CerfAdami1998} for the quantum system and ancilla. (a) For prepared quantum states, $Q$ and $A_1$ are entangled according to Eq.~\eqref{QA_stageI}. (b) For unprepared quantum states, $Q$ and $A_1$ are correlated according to Eq.~\eqref{QA1-classical-correlation} when the reference $R$ has been traced out. In this figure, we use the notation $S_1 = S(A_1)$ for the marginal entropy of ancilla $A_1$.}
    \label{QAvenn}
 \end{figure}

 Tracing out the reference system from the full density matrix $\rho(QRA_1)$, we note that the ancilla is perfectly correlated with the quantum system, 
 \begin{equation} \label{QA1-classical-correlation}
     \rho(QA_1) = \frac1d \sum_{x_1} |\wt{x}_1 \, x_1 \ra\la \wt{x}_1 \, x_1 |,
 \end{equation}
 in contrast to Eq.~\eqref{QA_stageI} where $A_1$ and $Q$ are entangled. Such correlations are indicated by a vanishing conditional entropy~\cite{CerfAdami1998}, $S(A_1|Q) = S(QA_1) - S(Q) = 0$. Tracing over~\eqref{QA1-classical-correlation}, we find that each system has maximum entropy $S(Q) \!=\! S(A_1) \!=\! 1$. In Fig.~\ref{QAvenn}, we compare the entropy Venn diagrams that are constructed from the states~\eqref{QA_stageI} and~\eqref{QA1-classical-correlation}.
 
 We note in passing that $R$ can be thought of as representing all previous measurements of the quantum system that have occurred before $A_1$. We contrast measurements of unprepared quantum states~\eqref{QR} as described in this section, with measurements of \emph{prepared} quantum states (see Sec.~\ref{sec:prepared-states}), which are initially pure states~\eqref{Q-prepared} defined without a reference system $R$.

 \subsection{Composition of the quantum ancilla} 
 \label{sec:ancilla-composition}
 
 The ancilla $A_1$ may, in practice, be composed of many qudits $A^{(1)}_1\!\ldots A_1^{(n)}$, which all measure $Q$ in some basis, according to the sequence of entangling operations $U_{\!Q A_1^{(n)}} \ldots U_{\!Q A_1^{(1)}}$ between $Q$ and $A^{(i)}_1$ (see Fig.~\ref{quantum-ancilla}). In this case, Eq.~\eqref{QA_stageI}, for example, is extended to
 \begin{equation}
 \label{QA_stageI_extended}
    |Q A_1\ra = \sum_{x_1} \alpha^{(1)}_{x_1} \, |\wt{x}_1\ra  \, |x_1\ra_{\!A_1^{(1)}} \ldots |x_1\ra_{\!A_1^{(n)}}\, .
 \end{equation} 
 Tracing out the quantum system from Eq.~\eqref{QA_stageI_extended}, the joint state of the entire ancilla is $\rho(A_1) \!=\! \rho(A_1^{(1)}\!\ldots A_1^{(n)}) \!=\! \sum_{x_1} |\alpha^{(1)}_{x_1}|^2 |x_1\ldots x_1 \ra\la x_1 \ldots x_1|$. That is, each component of $A_1$ is perfectly correlated with every other component, so that $A_1$ is internally self-consistent (``all parts of $A_1$ tell the same story"). However, while $A_1$ appears classical, and could conceivably consist of a macroscopic number of components, it is potentially {\em fragile}, in the sense that its entanglement with other devices may become hidden when any part  $A_1^{(i)}$ of $A_1$ is lost (traced over). In the following, we will distinguish ``amplifiable" from non-amplifiable devices. That is, a state is amplifiable if tracing over any of its components does not affect the correlations between its subsystems.
 
 \begin{figure} 
    \centering
    \includegraphics[width=\linewidth]{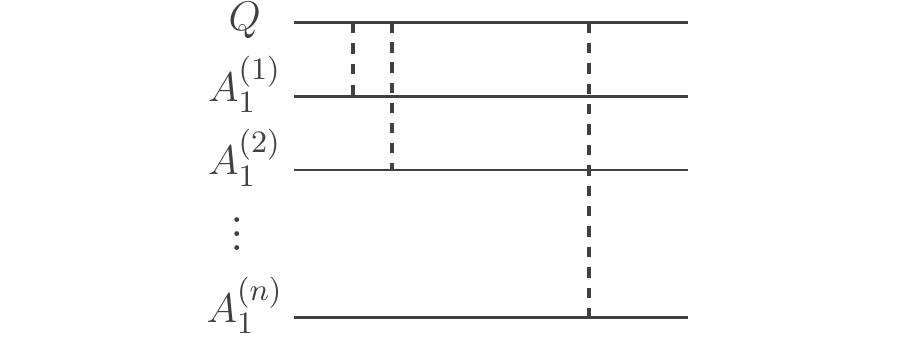} 
    \caption{Composition of the quantum ancilla. Dashed lines indicate entanglement between the quantum system, $Q$, and each of the $n$ qudits, $A_1^{(1)}\!, \ldots, A_1^{(n)}\!$, in the ancilla $A_1$.}
    \label{quantum-ancilla}
 \end{figure}

 To do this, we will consider in our discussion of Markovian quantum measurements in Sec.~\ref{sec:markov-prepared}, an additional step to the measurement process by introducing a macroscopic detector $D_1$ that measures the quantum ancilla $A_1$. In other words, $D_1$ observes the quantum observer $A_1$. This second system, which is also composed of many qudits, amplifies the measurement with $A_1$, by recording the outcome on a macroscopic device. While $A_1$ may be fragile depending on the situation, $D_1$ is robust: any part of $D_1$ could be traced over without affecting its correlations with other macroscopic measurement devices. While such a procedure (a quantum system observed by a quantum ancilla, which is observed by a classical device) may appear arbitrary, it merely represents a convenient way of splitting up the second stage of von Neumann's measurement~\cite{vonNeumann1932} to better keep track of the fate of entanglement.

 In the following sections, we formally define the concept of a quantum Markov chain that we use in this paper, in the context of consecutive measurements of a quantum system. We also further develop the formalism to describe unamplified measurements with quantum ancillae, $A_i$, which we will show are non-Markovian, and amplified measurements with macroscopic detectors, $D_i$, which are Markovian. The relationship between amplifiability and the Markov property will be the subject of Theorem~\ref{theorem-markov} in Sec.~\ref{sec:markov-chain-proof}.

 \section{Non-Markovian quantum measurements}
  \label{sec:three}
 In the previous section, we introduced the concept of non-Markovian measurements as those sequences of measurements that are not amplified by macroscopic devices, which we called $D$. In preparation for Theorem~\ref{theorem-markov} in Sec.~\ref{sec:markov-chain-proof} that establishes this correspondence, we first consider consecutive measurements with quantum ancillae of prepared and unprepared quantum states, and demonstrate the non-Markovian character of the chain of ancillae. In particular, we will use entropy Venn diagrams to study the correlations between subsystems and the distribution of entropies during consecutive measurements.

 \subsection{Consecutive measurements of a prepared quantum state}  \label{sec:prepared-states-consecutive}
 
 Building on the discussion from Sec.~\ref{sec:prepared-states} where we described a single measurement of a quantum system, we now introduce a second ancilla $A_2$ that measures $Q$. This measurement corresponds to a new basis, $|\wt{x}_2\ra$, that is rotated with respect to the old basis, $|\wt{x}_1\ra$, via the unitary transformation $U^{(2)}_{x_1x_2} = \la \wt{x}_2|\wt{x}_1\ra$. Unitarity requires that
 \begin{equation} 
 \label{eqn:unitarity-2}
 \begin{split} 
    \sum_{x_2} U^{(2)}_{x_1x_2} \, U^{(2)*}_{x'_1x_2} &= \delta_{x_1 x'_1}, \\
    \sum_{x_1} U^{(2)}_{x_1x_2} \, U^{(2)*}_{x_1x'_2} &= \delta_{x_2 x'_2}.
 \end{split} 
 \end{equation}
 After entangling $Q$ and $A_2$ with an operator analogous to~\eqref{entangling-operator}, the wavefunction~\eqref{QA_stageI} evolves to
 \begin{equation}\label{qab-prepared}
    |QA_1A_2\ra = \sum_{x_1x_2} \alpha^{(1)}_{x_1} \, U^{(2)}_{x_1x_2} \, |\wt{x}_2 \, x_1 x_2\ra,
 \end{equation}
 where the eigenstates of the second ancilla, $A_2$, are $|x_2\ra$. 
 
 Tracing out $Q$, the quantum ancillae are correlated according to the joint density matrix,
 \begin{equation}\label{AB-prepared}
    \rho(A_1 A_2) = \!\!\!\sum_{x_1 x'_1 x_2}\!\! \alpha^{(1)}_{x_1}  \alpha_{x'_1}^{(1)*} \, U^{(2)}_{x_1x_2} \, U^{(2)*}_{x'_1 x_2} \, |x_1 x_2\ra\la x'_1 x_2|,
 \end{equation} 
 while $A_1$ and $A_2$ together are entangled with the quantum system. The marginal ancilla density matrices, obtained from~\eqref{AB-prepared}, are
 \begin{equation} 
    \rho(A_i) = \sum_{x_i} q^{(i)}_{x_i} \, |x_i\ra\la x_i|\,, ~~~~ i = 1,2
 \end{equation} 
 where $q^{(1)}_{x_1} = |\alpha^{(1)}_{x_1}|^2$ is the probability distribution of ancilla $A_1$, while the probability distribution of $A_2$ is the incoherent sum $q^{(2)}_{x_2} = \sum_{x_1} |\alpha^{(1)}_{x_1}|^2 \, |U^{(2)}_{x_1x_2}|^2$. We can compare this expression to the coherent probability distribution $\sum_{x_1} |\alpha^{(1)}_{x_1} \, U^{(2)}_{x_1x_2}|^2$ for $A_2$ had the first measurement with $A_1$ never occurred. The marginal entropy of both $A_1$ and $A_2$ is the Shannon entropy $S(A_i) = H[q^{(i)}]$ of the probability distribution $q^{(i)}_{x_i}$.
 
 A third measurement of $Q$ with an ancilla $A_3$ yields
 \begin{equation}\label{qabc-prepared}
    |QA_1A_2A_3\ra = \!\!\!\sum_{x_1x_2x_3}\!\! \alpha^{(1)}_{x_1} \, U^{(2)}_{x_1x_2} \, U^{(3)}_{x_2x_3}\, |\wt{x}_3 \, x_1 x_2 x_3\ra,
 \end{equation}
 where $U^{(3)}_{x_2x_3} = \la \wt{x}_3|\wt{x}_2\ra$, and  $|x_3\ra$ are the basis states of ancilla $A_3$. The quantum system is entangled with all three ancillae in~\eqref{qabc-prepared}, as illustrated by the negative conditional entropies in Fig.~\ref{fig:qabc-prepared}. The degree of entanglement is controlled by the marginal entropy $S(A_3) = H[q^{(3)}]$ of ancilla $A_3$, for the probability distribution $q^{(3)}_{x_3} = \sum_{x_1 x_2} |\alpha^{(1)}_{x_1}|^2 \, |U^{(2)}_{x_1x_2}|^2 \, |U^{(3)}_{x_2x_3}|^2$. This procedure can be repeated for an arbitrary number of consecutive measurements and can be used to succinctly describe the quantum Zeno and anti-Zeno effects (see Sec.~\ref{sec:zeno}).
 
 \begin{figure} 
  \centering
  \includegraphics[width=\linewidth]{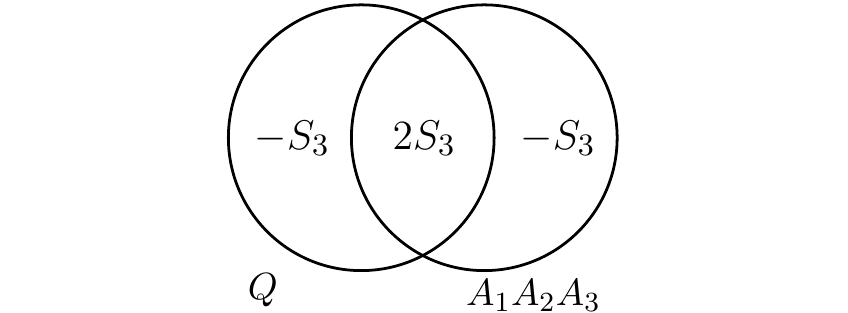} 
  \caption{Entropy Venn diagram for the state~\eqref{qabc-prepared}. The presence of negative conditional entropies reveals that the quantum system, $Q$, is entangled with all three ancillae, $A_1A_2A_3$. In this figure, we use the notation $S_3 = S(A_3)$, which is the marginal entropy of the last ancilla $A_3$ in the measurement sequence. To generalize this diagram from three to $n$ consecutive measurements of a prepared quantum system, $S_3$ is replaced by $S_n$, the entropy of the last ancilla, $A_n$, in the chain.}
  \label{fig:qabc-prepared}
\end{figure}

 \subsection{Consecutive measurements of an unprepared quantum state}
 
 Sequential measurements of an unprepared quantum system yield entropy distributions between the quantum system and ancillae that are different from those created by measurements of prepared quantum systems (see Sec.~\ref{sec:prepared-states-consecutive}). In this section, we consider a sequence of measurements of an unprepared quantum system that is initially entangled with a reference system as in~\eqref{QR}. Adding to the calculations in Sec.~\ref{sec:unprepared-states}, we measure $Q$ again in a rotated basis $U^{(2)}_{x_1x_2} = \la \wt{x}_2|\wt{x}_1\ra$, by entangling it with an ancilla $A_2$. Then, with $|x_2\ra$ the basis states of ancilla $A_2$, the wavefunction~\eqref{schmidt} becomes
 \begin{equation} \label{QRAB}
    |Q R A_1 A_2 \ra = \frac{1}{\sqrt{ d}} \sum_{x_1 x_2} U^{(2)}_{x_1x_2} \, |\wt{x}_2 \, x_1 x_1x_2\ra \,.
 \end{equation}  
 It is straightforward to show that the marginal ancilla density matrices are maximally mixed, $\rho(A_1) \!=\! \rho(A_2) \!=\! 1/d \, \mathbbm{1}$, where $\mathbbm{1}$ is the identity matrix of dimension $d$. It follows that $A_1$ and $A_2$ have maximum entropy $S(A_1) = S(A_2) =1$~\footnote{Recall that all logarithms are taken to the base $d$, giving entropies the units {\em dits}. If $d=2$, the units are {\em bits}.} . The joint state of $A_1$ and $A_2$ is diagonal in the ancilla product basis,
 \begin{equation}\label{rhoab}
    \rho(A_1A_2)=\frac1d\sum_{x_1}|x_1\ra\la x_1| \otimes \!\sum_{x_2} |U^{(2)}_{x_1x_2}|^2 |x_2\ra\la x_2| \, , 
 \end{equation}
 in contrast to Eq.~\eqref{AB-prepared}. Still, the quantum ancillae $A_1$ and $A_2$ are correlated. Equations~\eqref{AB-prepared} and~\eqref{rhoab} immediately imply that if the quantum system is measured repeatedly in the same basis ($U^{(2)}_{x_1x_2} = \delta_{x_1x_2}$) by independent devices, all of those devices will be perfectly correlated and will reflect the same outcome~\cite{CerfAdami1996,CerfAdami1998}.
 
 Let us entangle a third ancilla, $A_3$, with the quantum system such that $U^{(3)}_{x_2x_3} = \la \wt{x}_3|\wt{x}_2\ra$. We find that~\eqref{QRAB} evolves to
 \begin{equation}\label{qabc}
    |QRA_1A_2A_3\ra =\frac1{\sqrt d}  \sum_{x_1x_2x_3} \! U^{(2)}_{x_1x_2} \, U^{(3)}_{x_2x_3} \, |\wt{x}_3 \, x_1 x_1 x_2 x_3\ra.
 \end{equation}
 The entropic relationships between the variables $Q$, $R$, and $A_1A_2A_3$ are shown in Fig.~\ref{fig:qrabc}. The zero ternary mutual entropy, $S(Q:R:A_1A_2A_3) = 0$, indicates that the entropy $S(A_1A_3) = S_{13}$ that is shared by $R$ and $A_1A_2A_3$ is not shared with the quantum system. Tracing out the reference state, we find that the quantum system is entangled with all three ancillae. However, this entanglement is now shared with the reference system, which yields a Venn diagram that is different from Fig.~\ref{fig:qabc-prepared}.
 
 \begin{figure} 
    \centering
    \includegraphics[width=\linewidth]{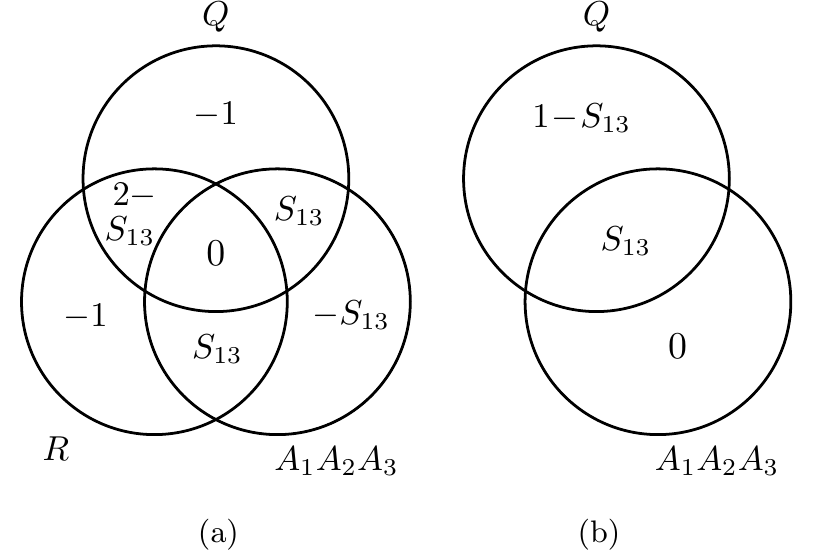} 
    \caption{Entropy Venn diagrams for the pure state~\eqref{qabc}, where $S_{13} = S(A_1A_3)$ is the joint entropy of Eq.~\eqref{rhoac}. (a) The entropy $S_{13}$ that is shared by the reference state, $R$, and the chain of ancillae, $A_1A_2A_3$, is not shared with the quantum system, $Q$, since the ternary mutual entropy vanishes: $S(Q:R:A_1A_2A_3) = 0$. (b) Tracing out the reference leaves $Q$ entangled with all ancillae. To generalize these diagrams from three to $n$ consecutive measurements of an unprepared quantum system, $S_{13}$ is replaced by $S_{1n}$, the joint entropy of the first and last ancillae, $A_1$ and $A_n$, in the chain.}
    \label{fig:qrabc}
 \end{figure}
 
 Consecutive measurements provide a unique opportunity to extract information about the state of the quantum system from the correlations created between the ancillae, as we do not directly observe either the quantum system or the reference. Tracing out $Q$ and $R$ from the full density matrix associated with Eq.~\eqref{qabc} yields the joint state of the three ancillae,
 \begin{equation}\label{rhoABC}
 \begin{split}
    \rho(A_1A_2A_3) \! = \frac1d  \sum_{x_1} |x_1\ra\la x_1|  & \otimes  \!\! \sum_{x_2x'_2}  U^{(2)}_{x_1x_2} \, U^{(2)*}_{x_1x'_2} \, |x_2\ra\la x'_2| \\
    & \otimes \! \sum_{x_3}  U^{(3)}_{x_2 x_3} \, U^{(3)*}_{x'_2 x_3}  \, |x_3\ra\la x_3|.
 \end{split}
 \end{equation}
Unlike the pairwise state $\rho(A_1A_2)$ in Eq.~\eqref{rhoab}, the state of all three ancillae is not an incoherent mixture. Performing a third measurement has, in a sense, revived the coherence of the $A_2$ subsystem.  
  
 An apparent collapse has taken place after the second consecutive measurement in Eq.~\eqref{rhoab} as the corresponding density matrix has no off-diagonal terms. However, the third measurement seemingly {\em undoes} this projection, as can be seen from the appearance of off-diagonal terms in Eq.~\eqref{rhoABC}. This ``reversal" is different from protocols that can ``un-collapse" weak measurements~\cite{korotkov2006,Jordan2010}, because it is clear that the wavefunction~\eqref{qabc} underlying the density matrix~\eqref{rhoABC} was never projected after all. The presence of the cross terms in Eq.~\eqref{rhoABC} has fundamental consequences for our understanding of the measurement process, and may open up avenues for developing new quantum protocols. In particular, the cross terms in Eq.~\eqref{rhoABC} enable the implementation of disentangling protocols~\cite{GlickAdami2017}.
 
As mentioned in Sec.~\ref{sec:ancilla-composition}, the ancilla $A_i$ may be composed of a large number of qudits. To account for a possibly macroscopic ancilla, we suppose that $n$ qudits $A_i^{(1)}\cdots A_i^{(n)}$, which comprise the $i$th ancilla $A_i$, measure the quantum system in the same given basis. In this case, the joint density matrix~\eqref{rhoABC} is extended to
 \begin{widetext}
 \begin{equation} \label{rhoABC2}
    \rho(A_1A_2A_3) = \frac1d  \sum_{x_1} |x_1 \ldots x_1 \ra\la x_1 \ldots x_1| \otimes  \! \sum_{x_2x'_2}  U^{(2)}_{x_1x_2} \, U^{(2)*}_{x_1x'_2} \, |x_2 \ldots x_2 \ra\la x'_2 \ldots x'_2| \otimes  \sum_{x_3}  U^{(3)}_{x_2 x_3} \, U^{(3)*}_{x'_2 x_3} \, |x_3 \ldots x_3 \ra\la x_3 \ldots x_3|.
 \end{equation} 
 \end{widetext}
 In principle, accounting for macroscopic ancillae does not destroy the coherence of the joint state~\eqref{rhoABC2}, which is concentrated in the $A_2$ subsystem. The coherence is protected as long as no qudits in the intermediate ancilla $A_2$ are `lost', implying a trace over their states, which removes all off-diagonal terms. In practical implementations, it may be effectively impossible to prevent decoherence when the number of qudits is sufficiently large. On the other hand, the pairwise density matrices $\rho(A_1A_2)$, $\rho(A_2A_3)$, and $\rho(A_1A_3)$ are unaffected by a loss of qudits as they are already diagonal. In addition, it can be easily shown that the coherence in Eqs.~\eqref{rhoABC} and~\eqref{rhoABC2} is fully destroyed if just the $A_2$ measurement is amplified by a detector $D_2$ (as we will see in Sec.~\ref{sec:markov-unprepared}). That is, amplification of the first and last ancillae has no effect on the coherence of~\eqref{rhoABC} and~\eqref{rhoABC2}.

 From the joint ancilla density matrix~\eqref{rhoABC}, we now derive several properties of the chain of quantum ancillae and summarize them using an entropy Venn diagram between $A_1$, $A_2$, and $A_3$. First, we construct all three pairwise ancilla density matrices and compute their entropies. Tracing out $A_3$ from the joint density matrix~\eqref{rhoABC} recovers $\rho(A_1A_2)$ in Eq.~(\ref{rhoab}), as it should because the interaction between $Q$ and $A_3$ does not affect the past interactions of $Q$ with $A_1$ and $A_2$. Tracing over $A_2$ in Eq.~\eqref{rhoABC} gives
 \begin{equation} \label{rhoac}
    \rho(A_1A_3) = \frac1d  \sum_{x_1}  |x_1\ra\la x_1|\otimes \!\sum_{x_2x_3} |U^{(2)}_{\!x_2x_3}|^2 \,  |U^{(3)}_{\!x_2x_3}|^2 \, |x_3\ra\la x_3|,
 \end{equation} 
 while tracing over $A_1$ yields
 \begin{equation} 
 \label{rhobc}
    \rho(A_2A_3)=\frac1d \sum_{x_2} |x_2\ra\la x_2|\otimes \sum_{x_3} |U^{(3)}_{x_2x_3}|^2 \, |x_3\ra\la x_3|\;. 
 \end{equation}
 All three pairwise density matrices are diagonal in the ancilla product basis (see Theorem~\ref{theorem-cq-unprepared} in Sec.~\ref{sec:coherence} for a general proof). We take ``diagonal in the ancilla product basis" to be synonymous with ``classical". From Eqs.~\eqref{rhoab},~\eqref{rhoac}, and~\eqref{rhobc}, we can calculate the entropy of each pair of ancillae and of the joint state of all three ancillae from Eq.~\eqref{rhoABC}. The pairwise entropies are
 \begin{align}
    S(A_1A_2)  & =  1  -  \frac{1}{d} \!\sum_{x_1x_2} |U^{(2)}_{x_1x_2}|^2 \log_d |U^{(2)}_{x_1x_2}|^2,  \label{SAB}\\
    S(A_2A_3)  & =  1  -  \frac{1}{d} \! \sum_{x_2x_3} |U^{(3)}_{x_2x_3}|^2 \log_d |U^{(3)}_{x_2x_3}|^2, \label{SBC}\\
    S(A_1A_3)  & =  1  -  \frac{1}{d} \! \sum_{x_1x_3} |\beta^{(13)}_{x_1x_3}|^2 \log_d |\beta^{(13)}_{x_1x_3}|^2. \label{SAC}
 \end{align}
where $|\beta^{(13)}_{x_1x_3}|^2 = \sum_{x_2} |U^{(2)}_{x_1x_2}|^2  |U^{(3)}_{x_2x_3}|^2$. Furthermore, it is straightforward to show that $S(A_1A_2A_3)$, the entropy of $\rho(A_1A_2A_3)$ in Eq.~\eqref{rhoABC}, is equal to $S(A_1A_3)$. This equality holds for any set of three consecutive measurements in an arbitrarily-long measurement chain as we will later prove in Theorem~\ref{theorem-cq-unprepared} of Sec.~\ref{sec:coherence}. With these joint entropies, we construct the entropy Venn diagram for the three ancillae that consecutively measured an unprepared quantum system, as shown in Fig.~\ref{fig:venn-ABC-nonmarkov}.
 
  \begin{figure} 
    \centering
    \includegraphics[width=\linewidth]{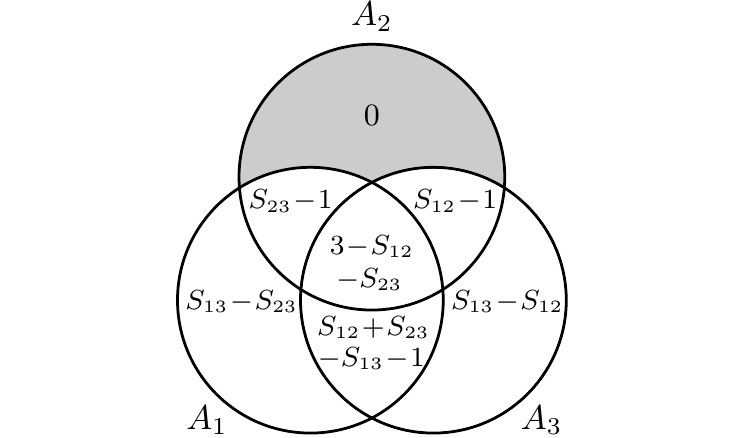} 
    \caption{Entropy Venn diagram for three quantum ancillae that measured an unprepared quantum system. In this figure, we use the notation $S(A_iA_j) = S_{ij}$ for the pairwise entropy of any two ancillae $A_i$ and $A_j$.}
    \label{fig:venn-ABC-nonmarkov} 
 \end{figure}
 
We apply the formalism presented thus far to the specific case of qubits (Hilbert space dimension $d \!=\! 2$). Measurements with ancilla $A_2$ at an angle $\theta_2$ relative to the previous measurement with $A_1$, and with ancilla $A_3$ at an angle $\theta_3$ relative to $A_2$, can each, without loss of generality, be implemented with a rotation matrix of the form  
 \begin{equation}
 \label{rotation}
        U^{(i)} = \left( \! \begin{array}{cc}\cos(\theta_i)& -\sin(\theta_i)\\
                           \sin(\theta_i)&  \;\;\; \cos(\theta_i)
              \end{array} \! \right) \: .
 \end{equation}
 For measurements at $\theta_2 \!=\! \theta_3 \!=\! \pi/4$, for example, we have $|U^{(2)}_{x_1x_2}|^2 \!=\! |U^{(3)}_{x_2x_3}|^2 \!=\! 1/2$, and we expect each ancilla to be maximally entropic: $S(A_1) \! = \! S(A_2) \! = \! S(A_3) \! = \! 1$ bit. The joint entropy of each pair of ancillae is two bits, as can be read off of Eqs.~(\ref{SAB}-\ref{SAC}). Because of the non-diagonal nature of $\rho(A_1A_2A_3)$ in Eq.~\eqref{rhoABC}, the joint density matrix of the three ancillae (using $\sigma_z$, the third Pauli matrix, and $\mathbbm{1}$, the $2\times2$ identity matrix),
 \begin{equation}\label{three} 
    \rho(A_1A_2A_3) = \frac{1}{8} 
    \begin{pmatrix}
    \mathbbm{1}  & -\sigma_z  & 0           & 0          \\
    -\sigma_z   & \mathbbm{1} & 0           & 0          \\ 
    0           & 0          & \mathbbm{1}  & \sigma_z   \\
    0           & 0          & \sigma_z    & \mathbbm{1} 
    \end{pmatrix} , 
 \end{equation}
 has entropy $S(A_1A_2A_3)=2$ bits, as can be checked by finding the eigenvalues of (\ref{three}). Figure~\ref{fig:venn-abc-qubit-nonmarkov} summarizes the entropic relationships for unamplified consecutive qubit measurements at $\theta_2=\theta_3=\pi/4$.

 It is instructive to note that the Venn diagram in Fig.~\ref{fig:venn-abc-qubit-nonmarkov} is the same as the one obtained for a one-time binary cryptographic pad where two classical binary variables (the source and the key) are combined to a third (the message) via a controlled-\textsc{not} operation~\cite{Schneidman2003} (the density matrices underlying the Venn diagrams are very different, however). The Venn diagram implies that the state of any one of the three quantum ancillae can be predicted from knowing the {\em joint} state of the two others. However, the prediction of $A_3$, for example, cannot be achieved using expectation values from $A_2$'s and $A_1$'s states separately, as the diagonal of Eqs.~\eqref{rhoABC} and~\eqref{three} corresponds to a uniform probability distribution. Thus, quantum coherence can be seen to encrypt classical information about past states. 
 \begin{figure} 
    \centering
    \includegraphics[width=\linewidth]{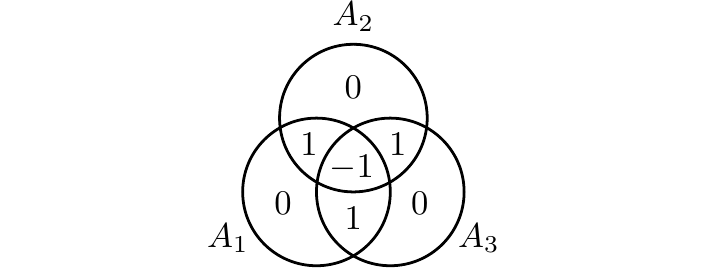} 
    \caption{Entropy Venn diagram for three qubit ancillae that measured an unprepared quantum system. Ancilla $A_2$ measured $Q$ at an angle $\theta_2=\pi/4$ relative to the basis of $A_1$, and $A_3$ measured $Q$ at $\theta_3=\pi/4$ relative to the basis of $A_2$.}
   \label{fig:venn-abc-qubit-nonmarkov}
 \end{figure}

 \subsection{Coherence of the chain of unamplified measurements}  
 \label{sec:coherence}
 
 So far we have seen that the joint ancilla density matrices describing unamplified measurements generally contain a non-vanishing degree of coherence. This suggests that coherence is not lost in the measurement sequence, but is actually contained in specific ancilla subsystems. In this section, we extend our unitary description of consecutive unamplified measurements of a quantum system to an arbitrarily-long chain of ancillae, and derive several properties of the measurement chain.
 
 Many of the joint ancilla density matrices that we have encountered in describing consecutive quantum measurements are so-called ``classical-quantum states". Such states have a block-diagonal structure of the form $\rho = \sum_i p_i \, \rho_i \otimes |i\ra\la i|$, where the density matrix $\rho_i$ appears with probability $p_i$. However, the ancilla states that we derive here have the additional property that the density matrices $\rho_i$ are always pure quantum superpositions. 
 
 For measurements of a prepared quantum system, classical-quantum states occur in the joint density matrices of two or more consecutive ancillae. For instance, recall the state $\rho(A_1A_2)$ from Eq.~\eqref{AB-prepared} that resulted from two measurements of a prepared quantum system. We can diagonalize this state with the set of non-orthogonal states $\alpha^{(2)}_{x_2} \, |\psi_{x_2}\ra = \sum_{x_1} \alpha^{(1)}_{x_1} \, U^{(2)}_{x_1x_2} \, |x_1\ra$ for subsystem $A_1$, so that~\eqref{AB-prepared} appears as
 \begin{equation} \label{rhoab-diagonal}
    \rho(A_1A_2) = \sum_{x_2} q^{(2)}_{x_2} \, |\psi_{x_2}\ra\la\psi_{x_2}| \otimes |x_2\ra\la x_2|,
 \end{equation} 
 where the normalization is equal to the probability distribution for the second ancilla $A_2$,
 \begin{equation} \label{diagonalize-A1A2}
    q^{(2)}_{x_2} = |\alpha^{(2)}_{x_2}|^2 = \sum_{x_1} |\alpha^{(1)}_{x_1}|^2 \, |U^{(2)}_{x_1x_2}|^2 .
 \end{equation} 
 
 On the other hand, classical-quantum states occur for measurements of unprepared quantum systems when there are at least {\em three} consecutive measurements, as the first measurement in that sequence can be viewed as the state preparation. For example, Eq.~\eqref{rhoABC} can be diagonalized with the set of non-orthogonal states $\beta^{(13)}_{x_1x_3} \, |\phi_{x_1x_3}\ra = \sum_{x_2} U^{(2)}_{x_1x_2} \, U^{(3)}_{x_2x_3} \, |x_2\ra$ for the $A_2$ subsystem, so that 
 \begin{equation}\label{rhoABC-diagonal}
    \rho(A_1A_2A_3) \!= \frac1d \sum_{x_1x_3} p^{(13)}_{x_1x_3} \,  |x_1x_3\ra\la x_1x_3| \otimes |\phi_{x_1x_3}\ra\la\phi_{x_1x_3}| ,
 \end{equation} 
 where the normalization is
 \begin{equation}
    p^{(13)}_{x_1x_3} = |\beta^{(13)}_{x_1x_3}|^2 = \sum_{x_2} |U^{(2)}_{x_1x_2}|^2 \, |U^{(3)}_{x_2x_3}|^2.
 \end{equation} 
 
 Evidently, from~\eqref{rhoab-diagonal} and~\eqref{rhoABC-diagonal}, each density matrix $\rho_i$ in the general state $\rho = \sum_i p_i \, \rho_i \otimes |i\ra\la i|$ corresponds to a pure state in our ancilla density matrices. This leads to an interesting observation that the entropy of a chain of ancillae is contained in either just the last device or in both the first and last devices together. In the first example above for $\rho(A_1A_2)$, it is straightforward to show using Eq.~\eqref{rhoab-diagonal} that $S(A_1A_2) = S(A_2)$. That is, the entropy of the sequence $A_1A_2$ is found at the end of the chain, $A_2$. From the definition of conditional entropy~\cite{CerfAdami_PRL1997}, it follows that the entropy of $A_1$ vanishes (it is in the pure state $|\psi_{x_2}\ra$), given the state of $A_2$: 
 \begin{equation}
     S(A_1|A_2) = S(A_1A_2) - S(A_2) = 0.
 \end{equation}

 In the second example above for $\rho(A_1A_2A_3)$, we find from Eq.~\eqref{rhoABC-diagonal} that $S(A_1A_3) = S(A_1A_2A_3)$. In other words, the entropy of the chain resides in the boundary, $A_1$ and $A_3$. It follows that, given the joint state of $A_1$ and $A_3$, $A_2$'s state has zero entropy (see the grey region in Fig.~\ref{fig:venn-ABC-nonmarkov}) and is fully determined (it is in the pure state $|\phi_{x_1x_3}\ra$):
 \begin{equation}
    S(A_2|A_1A_3) = S(A_1A_2A_3) - S(A_1A_3) = 0\:.
 \end{equation}

 In the following Theorems~\ref{theorem-cq-prepared} and~\ref{theorem-cq-unprepared}, we extend these results to an arbitrarily-long chain of quantum ancillae. These findings are important as they show that unamplified measurement chains retain a finite amount of coherence. Specifically, for measurements on unprepared quantum states, the coherence is contained in all ancillae up to the last, while for unprepared quantum states it is contained in all ancillae except for the boundary.
 
 To begin, we define (ancilla) random variables $A_i$ that take on states $x_i$ with probabilities $q^{(i)}_{x_i}$. Each ancilla has $d$ orthogonal states and the set of outcomes for the $i$th ancilla is labeled by the index $x_i$, where $x_i = 1, \ldots , d$.

  
 \begin{thm}\label{theorem-cq-prepared}
  The density matrix describing $j+1$ ancillae that consecutively measured a prepared quantum system is a classical-quantum state such that its joint entropy is contained only in the last device in the measurement chain. That is,
  \begin{equation}
      S(A_1 \ldots A_{j+1}) = S(A_{j+1}).
  \end{equation}
 \end{thm}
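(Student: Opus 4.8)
The plan is to (i) write the joint wavefunction of $Q$ and the $j+1$ ancillae explicitly, (ii) trace out $Q$ to exhibit $\rho(A_1\ldots A_{j+1})$ as a classical-quantum state whose conditional states are pure, and (iii) read off and compare $S(A_1\ldots A_{j+1})$ and $S(A_{j+1})$. For step (i) I would establish, by induction on $j$, the generalization of Eqs.~\eqref{qab-prepared} and~\eqref{qabc-prepared},
\begin{equation}\label{chain-wavefunction}
   |QA_1\ldots A_{j+1}\rangle = \sum_{x_1\cdots x_{j+1}} \alpha^{(1)}_{x_1}\,\Big(\prod_{k=2}^{j+1} U^{(k)}_{x_{k-1}x_k}\Big)\, |\wt{x}_{j+1}\, x_1\cdots x_{j+1}\rangle .
\end{equation}
The inductive step is exactly the construction used twice already in Sec.~\ref{sec:prepared-states-consecutive}: re-express $Q$ in the eigenbasis of the $(j+1)$th observable via $U^{(j+1)}_{x_j x_{j+1}} = \langle\wt{x}_{j+1}|\wt{x}_j\rangle$ and apply an entangling operator of the form~\eqref{entangling-operator} between $Q$ and $A_{j+1}$. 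Grouping the terms of~\eqref{chain-wavefunction} by the value of $x_{j+1}$ gives $|QA_1\ldots A_{j+1}\rangle = \sum_{x_{j+1}} |\Psi_{x_{j+1}}\rangle\,|\wt{x}_{j+1}\rangle\,|x_{j+1}\rangle$, where $|\Psi_{x_{j+1}}\rangle$ is the (unnormalized) vector on $A_1\ldots A_j$ obtained by dropping the $Q$ and $A_{j+1}$ factors; by orthonormality of the ancilla registers its squared norm is $\langle\Psi_{x_{j+1}}|\Psi_{x_{j+1}}\rangle = \sum_{x_1\cdots x_j}|\alpha^{(1)}_{x_1}|^2\prod_{k=2}^{j+1}|U^{(k)}_{x_{k-1}x_k}|^2 \equiv q^{(j+1)}_{x_{j+1}}$, which is exactly the probability distribution of the last ancilla displayed in Sec.~\ref{sec:prepared-states-consecutive}.

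Next I would trace over $Q$. Since $\{|\wt{x}_{j+1}\rangle\}$ is orthonormal the cross terms in $x_{j+1}$ drop, and with $|\psi_{x_{j+1}}\rangle = |\Psi_{x_{j+1}}\rangle/\sqrt{q^{(j+1)}_{x_{j+1}}}$ normalized,
\begin{equation}\label{cq-form}
   \rho(A_1\ldots A_{j+1}) = \sum_{x_{j+1}} q^{(j+1)}_{x_{j+1}}\,|\psi_{x_{j+1}}\rangle\langle\psi_{x_{j+1}}| \otimes |x_{j+1}\rangle\langle x_{j+1}| ;
\end{equation}
this is a classical-quantum state $\rho = \sum_i p_i\,\rho_i\otimes|i\rangle\langle i|$ in which every $\rho_i$ is a rank-one projector, which is the first claim of the theorem. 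The point that deserves care---and which I expect to be the only real obstacle---is \emph{why} the conditional state on $A_1\ldots A_j$ is pure: in~\eqref{chain-wavefunction} the ancilla registers $x_1,\ldots,x_{j+1}$ are in one-to-one correspondence with the summation indices, so the only subsystem carrying redundant information is $Q$ itself (it merely copies $x_{j+1}$), and tracing it out leaves no residual index to be summed incoherently; each $x_{j+1}$-block is therefore the single coherent superposition $|\Psi_{x_{j+1}}\rangle$ rather than a mixture. (Were one of the $A_k$ with $k\le j$ traced out instead, or composed of registers some of which were lost, this would fail---which is precisely the content of the amplification discussion in later sections.)

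Finally I would compute the entropies. The eigenvalues of~\eqref{cq-form} are $q^{(j+1)}_{x_{j+1}}$ times the eigenvalues of $|\psi_{x_{j+1}}\rangle\langle\psi_{x_{j+1}}|$, i.e.\ a single nonzero eigenvalue $q^{(j+1)}_{x_{j+1}}$ for each $x_{j+1}$, so $S(A_1\ldots A_{j+1}) = H[q^{(j+1)}]$ (equivalently, $S(\rho)=H[p]+\sum_i p_i S(\rho_i)$ with all $S(\rho_i)=0$). Tracing $A_1\ldots A_j$ out of~\eqref{cq-form} and using $\langle\psi_{x_{j+1}}|\psi_{x_{j+1}}\rangle=1$ gives $\rho(A_{j+1}) = \sum_{x_{j+1}} q^{(j+1)}_{x_{j+1}}|x_{j+1}\rangle\langle x_{j+1}|$, hence $S(A_{j+1}) = H[q^{(j+1)}]$ as well, and comparison yields $S(A_1\ldots A_{j+1}) = S(A_{j+1})$. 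Everything past~\eqref{chain-wavefunction} is routine bookkeeping and the standard entropy formula for block-diagonal states; the induction and the rank-one argument are where the actual content lies.
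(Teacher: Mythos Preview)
Your proposal is correct and follows essentially the same route as the paper: write the joint wavefunction~\eqref{chain-wavefunction}, define the normalized (but non-orthogonal) states $|\psi_{x_{j+1}}\rangle$ on $A_1\ldots A_j$, trace out $Q$ to obtain the classical-quantum form~\eqref{cq-form}, and then use that $|\psi_{x_{j+1}}\rangle\otimes|x_{j+1}\rangle$ is orthonormal so the spectrum of $\rho(A_1\ldots A_{j+1})$ is exactly $\{q^{(j+1)}_{x_{j+1}}\}$. The paper is slightly terser---it simply asserts the generalized wavefunction rather than framing it as an induction, and it does not pause to explain \emph{why} the conditional state is pure---but the mathematical content is identical.
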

 
 \begin{proof}
 Generalizing the result~\eqref{qabc-prepared}, the wavefunction $|\Psi\ra = |Q A_1 \ldots A_{j+1}\ra$ for $j+1$ consecutive measurements of a prepared quantum state is
 \begin{equation}\label{eqn:PreparedFullWavefunction}
     |\Psi\ra = \!\!\!\!\!\! \sum_{\substack{ x_1 \ldots x_{j+1} }} \!\!\!\! \alpha^{(1)}_{x_1} \,\, U^{(2)}_{x_1 x_2}  \ldots U^{(j+1)}_{x_j x_{j+1}} \, |\wt{x}_{j+1} \, x_1 x_2 \ldots x_j x_{j+1}\ra .
 \end{equation}
 The first ket $|\wt{x}_{j+1}\ra$ in the joint state on the right hand side of~\eqref{eqn:PreparedFullWavefunction} describes the quantum system, which is written in the basis of the last ancilla. Each $A_i$ measures the quantum system in a basis that is rotated relative to the basis of the previous $A_{i-1}$, such that $U^{(i)}_{x_{i-1},x_i} = \la \wt{x}_i | \wt{x}_{i-1} \ra$. The unitarity of $U^{(i)}$ requires that
 \begin{equation}  
 \label{eqn:unitarity}
 \begin{split}
    \sum_{x_{i-1}} U^{(i)}_{x_{i-1}x_i} U^{(i)*}_{x_{i-1}x'_i} &= \delta_{x_ix_i'} \, ,\\
    \sum_{x_i} U^{(i)}_{x_{i-1}x_i} U^{(i)*}_{x'_{i-1}x_i} &= \delta_{x_{i-1}x'_{i-1}} \, .
 \end{split} 
 \end{equation}
 
 Recasting expression~\eqref{eqn:PreparedFullWavefunction} in terms of the following set of non-orthogonal states,
 \begin{equation}
 \label{diagonalize-prepared-general}
      \alpha^{(j+1)}_{x_{j+1}} \,\, |\psi_{x_{j+1}}\ra = \!\!\! \sum_{x_1 \cdots x_j} \!\!\! \alpha^{(1)}_{x_{1}} \,\, U^{(2)}_{x_1x_2} \cdots \, U^{(j+1)}_{x_jx_{j+1}} \, |x_1 \cdots x_j\ra,
 \end{equation}
 yields
 \begin{equation}\label{eqn:FullWavefunction2}
    |\Psi\ra  = \sum_{x_{j+1}} \alpha^{(j+1)}_{x_{j+1}} ~ |\wt{x}_{j+1}~\psi_{x_{j+1}}~x_{j+1}\ra.
 \end{equation}
 This is not a true tripartite Schmidt decomposition~\cite{Pati2000} as the states $|\psi_{x_{j+1}}\ra$ are not orthogonal: the partial inner product $\la\psi_{x_{j+1}}|\Psi\ra$ does not give a state with a Schmidt number of one. Although the states $|\psi_{x_{j+1}}\ra$ are not orthogonal, they are normalized according to
 \begin{equation}
 \label{diagonalize-prepared-general-normalization}
      q^{(j+1)}_{x_{j+1}} = |\alpha^{(j+1)}_{x_{j+1}}|^2 = \!\!\! \sum_{x_1 \cdots x_j} \! |\alpha^{(1)}_{x_{1}}|^2 \,\, |U^{(2)}_{x_1x_2}|^2 \cdots \, |U^{(j+1)}_{x_jx_{j+1}}|^2, 
 \end{equation}
 which is the probability distribution of ancilla $A_{j+1}$. 

 Tracing out the quantum system from the density matrix $|\Psi\ra\la\Psi|$ formed from~\eqref{eqn:FullWavefunction2}, the state of all $j+1$ ancillae can be written as
 \begin{equation}\label{eqn:cq-prepared}
      \rho(A_1 \ldots A_{j+1}) = \!\! \sum_{x_{j+1}} \! q^{(j+1)}_{x_{j+1}} ~ |\psi_{x_{j+1}} \ra\la \psi_{x_{j+1}} | \otimes |x_{j+1} \ra\la x_{j+1} |.
 \end{equation}
 This state is non-diagonal in the ancilla product basis $|x_1 \cdots x_{j+1}\ra$, but is diagonalized by~\eqref{diagonalize-prepared-general}. The density matrix~\eqref{eqn:cq-prepared} is a classical-quantum state where the first $j$ ancillae are in the pure state $|\psi_{x_{j+1}} \ra$. 
 
 The appearance of classical-quantum states in the sequence of measurements leads to the interesting (and perhaps surprising) observation that the joint entropy of all ancillae in Eq.~\eqref{eqn:cq-prepared} resides only in the last device in the measurement chain. Since the joint state $|\psi_{x_{j+1}} \ra \otimes |x_{j+1}\ra$ is orthonormal, it is easy to see that the entropy of~\eqref{eqn:cq-prepared} is equal to the Shannon entropy of the probability distribution $q^{(j+1)}_{x_{j+1}}$. This is equivalent to the entropy of the last ancilla, so that
 \begin{equation}
    S(A_1 \ldots A_{j+1}) = S(A_{j+1}).
 \end{equation}
 \end{proof}
 Note that this implies that there is an upper bound to the joint entropy: ${\rm max}[S(A_1 \ldots A_{j+1})] = {\rm max}[S_{j+1}] = 1$. 
 
 \begin{figure} 
     \centering
     \includegraphics[width=\linewidth]{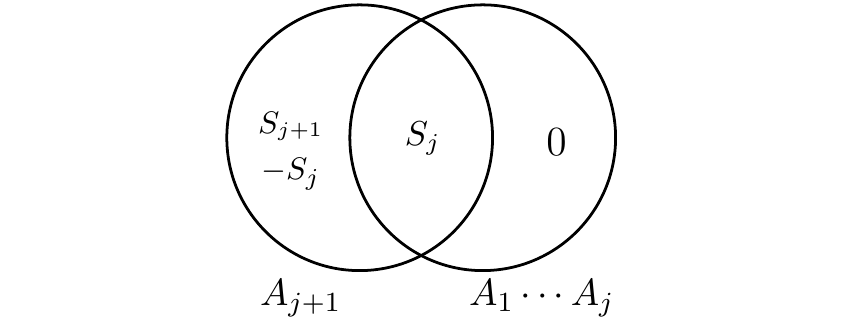} 
     \caption{Entropy Venn diagram for the unamplified measurement sequence with ancillae $A_1, A_2, \ldots, A_j, A_{j+1}$. According to~\eqref{eqn:cq-prepared}, the joint entropy of the first $j$ ancillae vanishes when given $A_{j+1}$, since the entropy resides only at the end of the chain. In this figure, we use the notation $S_{k} = S(A_k)$ for the marginal entropy of the $k$th ancilla.}
     \label{fig:chain-prepared-arbitrary}
 \end{figure}  
 From this property, it immediately follows that the entropy of the first $j$ ancillae, conditional on the state of the last ancilla, vanishes,
 \begin{equation}
      S(A_1 \ldots A_j | A_{j+1}) = S(A_1 \ldots A_{j+1}) - S(A_{j+1}) = 0.
 \end{equation}
 Therefore, if the state of the end of the measurement chain is known, then all preceding ancillae exist in a pure quantum superposition: The state of $A_1 \ldots A_j$ is fully determined (a zero entropy state), given $A_{j+1}$. This implies that the entropy of all ancillae in an arbitrarily-long sequence of measurements resides only at the end of the chain. The entropy Venn diagram for these two subsystems is shown in Fig.~\ref{fig:chain-prepared-arbitrary}.

 
 \begin{thm}\label{theorem-cq-unprepared}
  For $j+1$ consecutive measurements of an unprepared quantum system, where the reference is traced out, the density matrix for three or more consecutive ancillae is a classical-quantum state such that its joint entropy is contained only in the first and last device of the measurement chain. That is, 
  \begin{equation}
      S(A_{i-1} A_i \ldots A_j A_{j+1}) = S(A_{i-1} A_{j+1}).
  \end{equation}
 \end{thm}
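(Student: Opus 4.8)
The plan is to mimic the structure of the proof of Theorem~\ref{theorem-cq-prepared}, but now starting from the unprepared-state wavefunction~\eqref{schmidt} rather than~\eqref{Q-prepared}, so that a reference system $R$ is carried along and traced out at the end. First I would write down the generalization of Eq.~\eqref{qabc} to a chain of $j+1$ consecutive measurements of an unprepared $Q$: applying the entangling operators $U_{QA_1}, \ldots, U_{QA_{j+1}}$ in succession to the purified state~\eqref{QR} and repeatedly using the basis-change trick $|x_i\ra_R = \sum U^{(i)\intercal} |x_{i-1}\ra_R$ that produced~\eqref{schmidt}, one finds
\begin{equation}\label{eqn:unprepared-full}
 |\Psi\ra = |QRA_1\ldots A_{j+1}\ra = \frac1{\sqrt d}\!\!\sum_{x_1\ldots x_{j+1}}\!\!\! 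U^{(2)}_{x_1x_2}\cdots U^{(j+1)}_{x_jx_{j+1}}\,|\wt x_{j+1}\,x_1\,x_1x_2\ldots x_{j+1}\ra,
\end{equation}
where the $R$ register carries the index $x_1$. The key observation is that $R$ is perfectly correlated with $A_1$, so tracing out $Q$ and $R$ is, as far as the off-diagonal structure goes, like tracing out $Q$ and then pinning $A_1=x_1$; this is exactly the mechanism that produced~\eqref{rhoABC}.

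Second, I would focus on a window of consecutive ancillae $A_{i-1}A_i\ldots A_jA_{j+1}$ inside this chain. Tracing $Q$ and $R$ out of~\eqref{eqn:unprepared-full}, and then tracing out the ancillae outside the window, the surviving sums over the "outer" indices collapse by the unitarity relations~\eqref{eqn:unitarity} to give a state that depends on the endpoints $x_{i-1}$ and $x_{j+1}$ diagonally and on the interior indices $x_i\ldots x_j$ coherently. Concretely, I would introduce the non-orthogonal but normalized interior states
\begin{equation}\label{eqn:phi-def}
 \beta_{x_{i-1}x_{j+1}}\,|\phi_{x_{i-1}x_{j+1}}\ra = \!\!\sum_{x_i\ldots x_j}\!\! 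U^{(i)}_{x_{i-1}x_i}U^{(i+1)}_{x_ix_{i+1}}\cdots U^{(j+1)}_{x_jx_{j+1}}\;|x_i\ldots x_j\ra,
\end{equation}
with $|\beta_{x_{i-1}x_{j+1}}|^2 = \sum_{x_i\ldots x_j}|U^{(i)}_{x_{i-1}x_i}|^2\cdots|U^{(j+1)}_{x_jx_{j+1}}|^2$, which is precisely the conditional transition probability from $x_{i-1}$ to $x_{j+1}$. Then the reduced state takes the classical-quantum form
\begin{equation}\label{eqn:cq-unprepared}
 \rho(A_{i-1}A_i\ldots A_jA_{j+1}) = \frac1d\!\!\sum_{x_{i-1}x_{j+1}}\!\!\! |\beta_{x_{i-1}x_{j+1}}|^2\,|x_{i-1}x_{j+1}\ra\la x_{i-1}x_{j+1}|\otimes|\phi_{x_{i-1}x_{j+1}}\ra\la\phi_{x_{i-1}x_{j+1}}|,
\end{equation}
generalizing~\eqref{rhoABC-diagonal}. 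Because $|x_{i-1}x_{j+1}\ra\otimes|\phi_{x_{i-1}x_{j+1}}\ra$ ranges over an orthonormal family as $(x_{i-1},x_{j+1})$ vary (the $A_{i-1}$ and $A_{j+1}$ registers are already orthogonal), the eigenvalues of this state are just $|\beta_{x_{i-1}x_{j+1}}|^2/d$, so $S(A_{i-1}A_i\ldots A_jA_{j+1})$ equals the Shannon entropy of that distribution. The same computation applied to the two-ancilla window $A_{i-1}A_{j+1}$ (trace out all interior ancillae as well) gives $\rho(A_{i-1}A_{j+1}) = \frac1d\sum |\beta_{x_{i-1}x_{j+1}}|^2 |x_{i-1}\ra\la x_{i-1}|\otimes|x_{j+1}\ra\la x_{j+1}|$, whose entropy is the Shannon entropy of the identical distribution, yielding $S(A_{i-1}A_i\ldots A_jA_{j+1}) = S(A_{i-1}A_{j+1})$ and hence $S(A_i\ldots A_j\,|\,A_{i-1}A_{j+1}) = 0$.

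The main obstacle, and the step deserving the most care, is verifying that the interior indices really decouple as claimed in~\eqref{eqn:cq-unprepared} — i.e., that after tracing out $Q$, $R$, and the ancillae lying to the left of $A_{i-1}$ or to the right of $A_{j+1}$, no residual coherence survives in the $A_{i-1}$ or $A_{j+1}$ registers and no cross-terms in the endpoint indices appear. This is where the two unitarity identities~\eqref{eqn:unitarity} must be applied in the right order: the left-tail sum over $x_1,\ldots,x_{i-2}$ together with the $R$-correlation forces a Kronecker delta that diagonalizes the $A_{i-1}$ register, and the right-tail sum over $x_{j+2},\ldots,x_{j+1+k}$ (using $\sum_{x}|U^{(m)}_{x_{m-1}x}|^2$-type contractions down the chain) telescopes to unity and diagonalizes the $A_{j+1}$ register. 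I would also note the boundary cases explicitly: when $i-1=1$ the "left tail" is empty and $R$ alone supplies the diagonalization of $A_1$ (recovering~\eqref{rhoABC-diagonal} when additionally $j+1=3$), and the hypothesis of three or more consecutive ancillae is exactly what guarantees a nonempty interior so that the statement is not vacuous. Once the structure~\eqref{eqn:cq-unprepared} is established, the entropy identity is immediate, as above.
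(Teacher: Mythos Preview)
Your proposal is correct and follows essentially the same route as the paper: write the full unprepared-state wavefunction, trace out $Q$, $R$, and the ancillae outside the window using the unitarity relations~\eqref{eqn:unitarity} to force Kronecker deltas on the endpoint registers, recognize the resulting state as the classical-quantum form~\eqref{eqn:cq-arbitrary} with pure interior $|\phi_{x_{i-1}x_{j+1}}\rangle$, and read off the entropy as the Shannon entropy of $p^{(i-1,j+1)}_{x_{i-1}x_{j+1}}/d$, which coincides with $S(A_{i-1}A_{j+1})$. One small correction: the right-tail contractions are not ``$\sum_x|U^{(m)}_{x_{m-1}x}|^2$-type'' but rather the delta-producing identity $\sum_{x_m}U^{(m)}_{x_{m-1}x_m}U^{(m)*}_{x'_{m-1}x_m}=\delta_{x_{m-1}x'_{m-1}}$, which is what actually propagates diagonality backward from $Q$ to $A_{j+1}$; otherwise your sketch is, if anything, more explicit about the tracing mechanism than the paper itself.
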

 
 \begin{proof}
 Generalizing the result~\eqref{qabc}, the wavefunction $|\Psi'\ra = |Q R A_1\ldots A_{j+1}\ra$ of $j+1$ ancillae that consecutively measured an unprepared quantum state is
 \begin{equation}\label{eqn:UnpreparedFullWavefunction}
     |\Psi'\ra = \frac{1}{\sqrt{d}} \!\!\!\! \sum_{\substack{ ~~x_1 \ldots x_{j+1} }} \!\!\!\!\! U^{(2)}_{x_1 x_2}  \!\ldots U^{(j+1)}_{x_j x_{j+1}} |\wt{x}_{j+1} \, x_1 \, x_1 x_2 \ldots x_{j+1}\ra .
 \end{equation}
 Of the full set of consecutive measurements, consider the subset $A_{i-1}, A_i, \ldots,A_j,A_{j+1}$, where $1 < i< j$. Tracing out $Q$, the reference, and all other ancilla states from the full density matrix $|\Psi'\ra\la\Psi'|$, and using the unitarity of each $U^{(i)}$ as stated in Eq.~\eqref{eqn:unitarity}, the density matrix for this subset can be written as
 \begin{widetext}
 \begin{equation}
 \label{eqn:cq-arbitrary}
    \rho(A_{i-1} \ldots A_{j+1}) =  \frac{1}{d} \, \sum_{\substack{ x_{i-1} \\ x_{j+1}}} \, p^{(i-1,j+1)}_{x_{i-1}x_{j+1}} \,~ |x_{i-1}\ra\la x_{i-1} | \otimes |\phi_{x_{i-1}x_{j+1}} \ra\la \phi_{x_{i-1}x_{j+1}} | \otimes |x_{j+1} \ra\la x_{j+1}| \,. 
 \end{equation}
 \end{widetext}
 This is a classical-quantum state with the intermediate ancillae $A_i,\ldots,A_j$ in the pure state $|\phi_{x_{i-1}x_{j+1}} \ra$. In the ancilla product basis $|x_{i-1} x_i \ldots x_j x_{j+1}\ra$, this matrix is block-diagonal due to the non-diagonality of the subsystem $A_i, \ldots, A_j$. However, it is diagonalized by the non-orthogonal states
 \begin{equation} 
 \label{diagonalize-unprepared-general}
        \beta^{(i-1,j+1)}_{x_{i-1} x_{j+1}} \, |\phi_{x_{i-1} x_{j+1}}\ra  = \!\!\!\sum_{\substack{ x_i \cdots x_j }} \!\!  U^{(i)}_{x_{i-1} x_i} \ldots U^{(j+1)}_{x_j x_{j+1}} \, |x_i \ldots x_j \ra ,
 \end{equation}
 which are normalized according to
 \begin{equation}
 \label{diagonalize-unprepared-general-normalization}
   p^{(i-1,j+1)}_{x_{i-1} x_{j+1}} = |\beta^{(i-1,j+1)}_{x_{i-1} x_{j+1}}|^2  = \!\!\! \sum_{\substack{ x_i \cdots x_j }} \! | U^{(i)}_{x_{i-1} x_i }|^2 \ldots |U^{(j+1)}_{x_j x_{j+1} }|^2.
 \end{equation} 
 These normalization coefficients obey the sum rule 
 \begin{equation}
      \sum_{\substack{ x_{i-1} }} p^{(i-1,j+1)}_{x_{i-1} x_{j+1}} = \sum_{\substack{ x_{j+1} }} p^{(i-1,j+1)}_{x_{i-1} x_{j+1}}  = 1\,.
 \end{equation}

 The density matrix for any two ancillae is already diagonal in the ancilla product basis (it is classical). For example, the joint state of $A_{i-1}$ and $A_{j+1}$ is
 \begin{equation}\label{eqn:cq-arbitrary-pairwise}
      \rho(A_{i-1} A_{j+1}) = \frac1d \sum_{\substack{ x_{i-1} \\ x_{j+1} }}  p^{(i-1,j+1)}_{x_{i-1} x_{j+1}} ~ |x_{i-1} x_{j+1}\ra\la x_{i-1} x_{j+1}|,
 \end{equation}
 so that its entropy reduces to the Shannon entropy $H[p^{(i-1,j+1)}/d]$ of the distribution $p^{(i-1,j+1)}_{x_{i-1}x_{j+1}}/d$. However, the density matrix for three or more consecutive ancillae corresponds to a classical-quantum state~\eqref{eqn:cq-arbitrary}. This state has non-zero coherence that is contained in the subsystem of the intermediate ancillae, which are in the (non-orthogonal) pure state $|\phi_{x_{i-1}x_{j+1}}\ra$. Since the joint state $|x_{i-1}\ra\otimes |\phi_{x_{i-1}x_{j+1}} \ra \otimes |x_{j+1}\ra$ is still orthonormal, it is straightforward to show that the entropy of~\eqref{eqn:cq-arbitrary} is equal to the (Shannon) entropy of~\eqref{eqn:cq-arbitrary-pairwise}, despite the fact that the underlying state~\eqref{eqn:cq-arbitrary} is non-classical:
 \begin{equation}
    S(A_{i-1} A_i \ldots A_j A_{j+1}) = S(A_{i-1}A_{j+1}).
 \end{equation}
 \end{proof}
 
 It follows directly that the entropy of the intermediate ancillae $A_i, \ldots, A_j$ vanishes when given the joint state of the ancillae $A_{i-1}$ and $A_{j+1}$,
 \begin{equation}
 \begin{split} 
     S(A_i \ldots A_j | A_{i-1} A_{j+1}) & = S(A_{i-1} A_i \ldots A_j A_{j+1}) \\
        & \,~~~ - S(A_{i-1}A_{j+1}) \\
        & = 0.
 \end{split}
 \end{equation}
 Evidently, if the state of the {\em boundary} of the chain is known, then the intermediate ancillae exist in a pure quantum superposition. The joint state of $A_i, \ldots, A_j$ is fully determined (a zero-entropy state), given the joint state of $A_{i-1}$ that measured $Q$ in the past, together with $A_{j+1}$ that measured $Q$ in the future. Thus, for measurements on unprepared quantum systems, the entropy of an arbitrarily-long ancilla chain is found only in its boundary. The entropy Venn diagram for the boundary and the bulk of the measurement chain is shown in Fig.~\ref{fig:chain-arbitrary}. 
 
 That the entropy of a chain of measurements is determined entirely by the entropy of the chain's boundary may seem remarkable, but is reminiscent of the holographic principle~\cite{tHooft1993,Susskind1995,SusskindWitten1998}. Indeed, it is conceivable that an extension of the one-dimensional quantum chains we discussed here to tensor networks~\cite{EvenblyVidal2011} could make this correspondence more precise~\cite{Swingle2012}. We contrast this result with the previous Theorem~\ref{theorem-cq-prepared} for measurements on prepared quantum systems, where the entropy resided only at the end of the chain since the preparation was already known. 
 \begin{figure} 
     \centering
     \includegraphics[width=\linewidth]{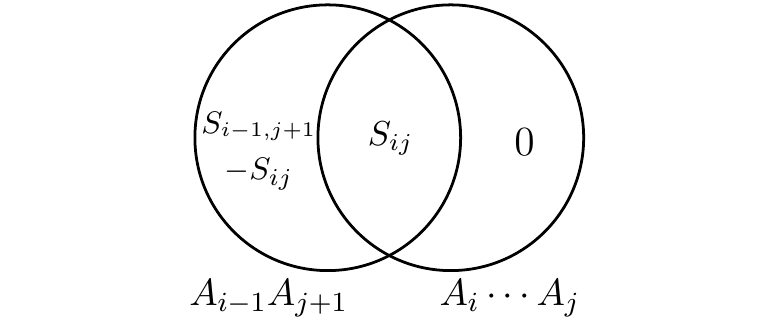} 
     \caption{Entropy Venn diagram for an unamplified measurement sequence with ancillae $A_{i-1}, A_i , \ldots, A_j, A_{j+1}$. According to~\eqref{eqn:cq-arbitrary}, the entropy of all intermediate ancillae, $A_i , \ldots, A_j$, vanishes when given $A_{i-1}$ and $A_{j+1}$, since the entropy resides at the boundary of the chain. In this figure, we use the notation $S_{k\ell} = S(A_k A_{\ell})$ for the pairwise entropy of any two ancillae $A_k$ and $A_{\ell}$.}
     \label{fig:chain-arbitrary}
 \end{figure}

 \section{Markovian quantum measurements}
 \label{sec:markov}
 The non-Markovian measurements we have been discussing up to this point are potentially fragile: while the pointers can consist of many subsystems (even a macroscopic number), the entanglement they potentially display with other quantum systems will be lost even if only a single qudit escapes our control (and therefore,  mathematically speaking, must be traced over). In this section we discuss a second step within von Neumann's second stage of quantum measurement, where we observe the fragile quantum ancilla using a secondary observer. While this quantum ``observer of the observer" also potentially consists of many different subsystems, it is robust in the sense that tracing over any of the degrees of freedom making up the pointer variable does not affect the relative state of the pointer and the quantum system or other devices.

 \subsection{Amplifying quantum measurements}
 \label{sec:amplify}
 To amplify a measurement, we observe the first quantum observer (denoted by $A_1$) by measuring $A_1$ in the same basis with a detector $D_1$. This additional interaction with the first ancilla in~\eqref{QA_stageI} leads to the tripartite entangled state 
 \begin{equation} \label{QAD}
    |QA_1D_1\ra =  \mathbbm{1}_Q \otimes U_{A_1D_1} \, |QA_1\ra \, |0\ra = \!\sum_{x_1} \alpha^{(1)}_{x_1} \, |\wt{x}_1 x_1 x_1\ra.
 \end{equation}
 Tracing over the quantum system, we find that detector $D_1$ is perfectly correlated with the quantum ancilla $A_1$ according to the density matrix 
 \begin{equation} \label{AD}
    \rho(A_1D_1) = \sum_{x_1} q^{(1)}_{x_1} \, |x_1x_1\ra\la x_1x_1|,
 \end{equation}
 where $q^{(1)}_{x_1} = |\alpha^{(1)}_{x_1}|^2$. That is, they consistently reflect the same measurement outcomes. Together, $A_1$ and $D_1$ are still entangled with the quantum system. In Fig.~\ref{fig:amplify-venn} we show the entropy Venn diagrams for the entangled state~\eqref{QAD} and the correlated state~\eqref{AD}. Since the underlying state~\eqref{QAD} is pure, the ternary mutual entropy vanishes, $S(Q:A_1:D_1) = 0$. In other words, the correlations that are created between the devices (the $S(A_1:D_1)$ dits of information that are gained in the measurement) are not shared with the quantum system.
 \begin{figure} 
    \centering
    \includegraphics[width=\linewidth]{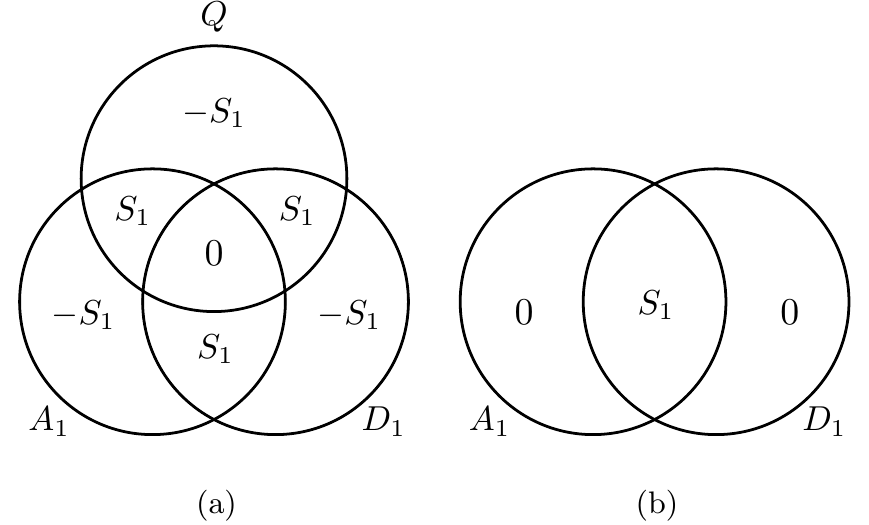} 
    \caption{Entropy Venn diagram for (a) the tripartite entangled state~\eqref{QAD}. (b) Tracing over the quantum system, $A_1$ and $D_1$ are perfectly correlated as in Eq.~\eqref{AD}. The $S(A_1:D_1) = S_1$ bits of information gained in the measurement are not shared with the quantum system since the mutual ternary entropy vanishes, $S(Q:A_1:D_1) = 0$. The quantity $S_1 = H[ q^{(1)}] $ is the marginal entropy of each of the three subsystems, $Q$, $A_1$ and $D_1$.}
    \label{fig:amplify-venn}
 \end{figure}

 The macroscopic device $D_1$ is composed of many qudits $D^{(1)}_1, \ldots, D^{(n)}_1$ that all measure the quantum ancilla $A_1$ according to the sequence of entangling operations $U_{\!\!A_1 D_1^{(n)}} \ldots U_{\!\!A_1 D_1^{(1)}}$ (see Fig.~\ref{fig:amplify}). That is, Eq.~\eqref{QAD} can be expanded to
 \begin{equation} 
    |Q A_1 D_1\ra = \sum_{x_1} \, \alpha^{(1)}_{x_1} \, |\wt{x}_1 \ra  \,|x_1\ra \, |x_1\ra_{D^{(1)}_1} \ldots |x_1\ra_{D^{(n)}_1} \,.
 \end{equation} 
 The measurement outcome is read out from the state of the joint system
 \begin{equation} 
    \rho(D_1^{(1)} \!\ldots D_1^{(n)}) = \sum_{x_1} q^{(1)}_{x_1} \, |x_1 \ldots x_1 \ra\la x_1 \ldots x_1|,
 \end{equation}
 where it is clear that the device $D_1$ is self-consistent and all of its components reflect the same measurement outcome. This state is robust in the sense that it is not necessary to ``keep track'' of all qudits in the detector $D_1$ to observe correlations. Thus, tracing over any of the states in the expression above returns an equivalently self-consistent state.
 
 \begin{figure} 
    \centering
    \includegraphics[width=\linewidth]{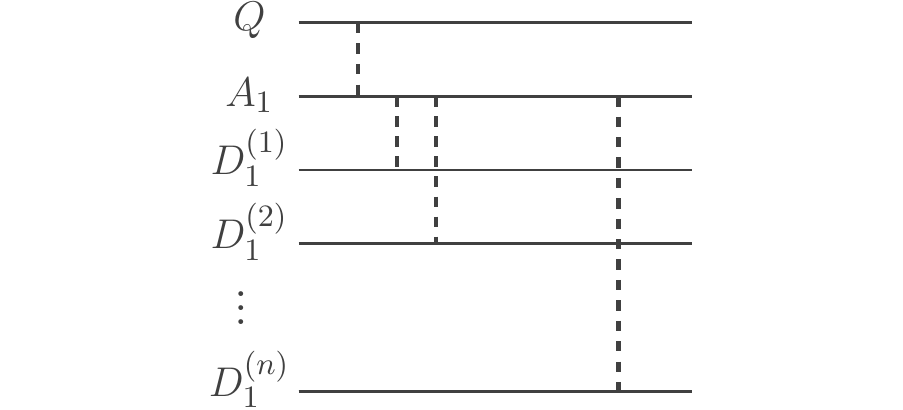} 
    \caption{Observing the quantum observer $A_1$ with a detector $D_1$. Dashed lines indicate the entanglement created by the measurement between the ancilla $A_1$ and each of the $n$ qudits $D_1^{(1)},\ldots, D_1^{(n)}$ that comprise the detector $D_1$.}
    \label{fig:amplify}
 \end{figure}
 
 In the following two sections, we amplify a chain of consecutive measurements of a prepared and an unprepared quantum system. Unlike our previous results for unamplified measurements, we will find that the joint state of detectors is now always classical (diagonal in the ancilla product basis), leading to entropy distributions that are significantly different from those of the unamplified ancillae.

 \subsection{Amplifying consecutive measurements of a prepared quantum state}
 \label{sec:markov-prepared}
 We begin by first considering the amplification of consecutive measurements of a prepared quantum state. Introducing a second pair of devices $A_2$ and $D_2$, Eq.~\eqref{QAD} evolves to
 \begin{equation}\label{QADAD}
    |QA_1D_1A_2D_2\ra = \sum_{x_1 x_2} \alpha^{(1)}_{x_1} \, U^{(2)}_{x_1x_2} \, |\wt{x}_2 \, x_1x_1 \, x_2x_2\ra.
 \end{equation} 
 Again, we find detector $D_2$ to be perfectly correlated with the quantum ancilla $A_2$. The joint state of the detectors $D_1$ and $D_2$ is the classical density matrix
 \begin{equation}\label{D1D2} 
    \rho(D_1D_2) = \sum_{x_1x_2} |\alpha^{(1)}_{x_1}|^2 \, |U^{(2)}_{x_1x_2}|^2 \, |x_1x_2\ra\la x_1x_2|.
 \end{equation} 
 This state is diagonal in the ancilla product basis, unlike the state~\eqref{AB-prepared} before amplification. Thus, the effect of amplifying the ancillae is a removal of all off-diagonal elements in the joint density matrices. 
 
 From~\eqref{D1D2}, we see that for repeated measurements in the same basis ($U^{(2)}_{x_1x_2} = \delta_{x_1x_2}$) the results are fully correlated: when $D_2$ measures in the same basis as $D_1$, the joint density matrix~\eqref{D1D2} reduces to $\rho(D_1D_2) = \sum_{x_1} |\alpha^{(1)}_{x_1}|^2 \, |x_1x_1\ra\la x_1x_1|$ so that the entropy of $D_2$ given $D_1$ vanishes, $S(D_2|D_1) = S(D_1D_2) - S(D_1) = 0$. The conditional probability to record the outcome $x_2$, given that the first measurement yielded $x_1$, is simply $p(x_2|x_1) = \delta_{x_1x_2}$. In other words, both devices agree on the outcome, as expected. It appears as if the quantum system had indeed collapsed into an eigenstate of the first device $D_1$ since the second device $D_2$ correctly confirms the measurement outcome. This result is consistent with the Copenhagen view of the quantum state during the measurement sequence as $|Q\ra \rightarrow |\wt{x}_1\ra \rightarrow |\wt{x}_1\ra$. However, we see that no collapse assumption is needed for a consistent description of the measurement process, and in fact, all amplitudes of the quantum system are preserved. That is, \eqref{QADAD} continues to evolve as a pure state. 
 
 In addition, the probability distribution for the second measurement with the pair $A_2D_2$ is consistent with a collapse postulate as it is given by the incoherent sum $q^{(2)}_{x_2} \!=\! \sum_{x_1} \! |\alpha^{(1)}_{x_1}|^2 \, |U^{(2)}_{x_1x_2}|^2$, instead of the coherent expression $\sum_{x_1} \! |\alpha^{(1)}_{x_1}\, U^{(2)}_{x_1x_2}|^2$, which is the result if the first measurement with $A_1D_1$ had never occurred.

 \subsection{Amplifying consecutive measurements of an unprepared quantum state}
 \label{sec:markov-unprepared}
 In this section, we study consecutive measurements of an unprepared quantum state, which will yield an entropy Venn diagram for the detectors that differs significantly from Fig.~\ref{fig:venn-ABC-nonmarkov} for the quantum ancillae. To begin, we follow the procedure introduced in Sec.~\ref{sec:amplify}, and amplify the state~\eqref{qabc} of three consecutive measurements of an unprepared quantum state. 
 
 First, we show that amplifying the qubits on the boundary of the chain of measurements does not affect the coherence of the joint state~\eqref{rhoABC}. Introducing macroscopic devices $D_1$ and $D_3$ that amplify the quantum ancillae $A_1$ and $A_3$, respectively, we find that the state~\eqref{qabc} evolves to
 \begin{widetext}
 \begin{equation}\label{QRADADAD}
    |QRA_1D_1A_2A_3D_3\ra \,=\,\frac1{\sqrt d}  \sum_{x_1x_2x_3}  \, U^{(2)}_{x_1x_2} \,\, U^{(3)}_{x_2x_3}  \,\, |\wt{x}_3 \, x_1 x_1 x_1 \, x_2 \, x_3 x_3\ra. 
 \end{equation}
 \end{widetext}
 As before, each pair of systems $A_i D_i$ are perfectly correlated and reflect the same outcome from their measurement of $Q$. Tracing over the density matrix formed from this wavefunction, we find that the new state of $A_1A_2A_3$ is unchanged from Eq.~\eqref{rhoABC}.

 In contrast, amplifying the {\em intermediate} ancilla destroys all of the coherence in the original state~\eqref{rhoABC}. That is, measuring $A_2$ with a detector $D_2$ leads to a fully incoherent density matrix for $A_1A_2A_3$ that is now equivalent to the joint state of detectors
 \begin{equation}\label{rhoA'B'C'}
    \rho(D_1D_2D_3) = \frac1d  \sum_{\substack{x_1x_2x_3}} \! |U^{(2)}_{x_1x_2}|^2  |U^{(3)}_{x_2x_3}|^2  |x_1x_2x_3\ra\la x_1x_2x_3|.
 \end{equation}

 We can contrast this state to the result we obtained for unamplified measurements in Eq.~\eqref{rhoABC} using entropy Venn diagrams. Compare the diagram in Fig.~\ref{fig:venn-ABC-markov} for the state $\rho(D_1D_2D_3)$ [Eq.~\eqref{rhoA'B'C'}] to the diagram in Fig.~\ref{fig:venn-ABC-nonmarkov} for the unamplified state $\rho(A_1A_2A_3)$ [Eq.~\eqref{rhoABC}]. Clearly, amplification of just the intermediate ancilla $A_2$ (or, equivalently, all three quantum ancillae) has destroyed the coherence of the original state $\rho(A_1A_2A_3)$, which was encoded in the $A_2$ subsystem. Note that pairwise entropies are the same for both amplified and unamplified measurements of unprepared quantum systems, e.g., $S(A_iA_j) = S(D_iD_j)$. We proved previously in Theorem~\ref{theorem-cq-unprepared} of Sec.~\ref{sec:coherence} that pairwise density matrices~\eqref{eqn:cq-arbitrary-pairwise} are always diagonal, so that amplifying those ancillae does not affect their joint density matrix.
 \begin{figure} 
    \centering
    \includegraphics[width=\linewidth]{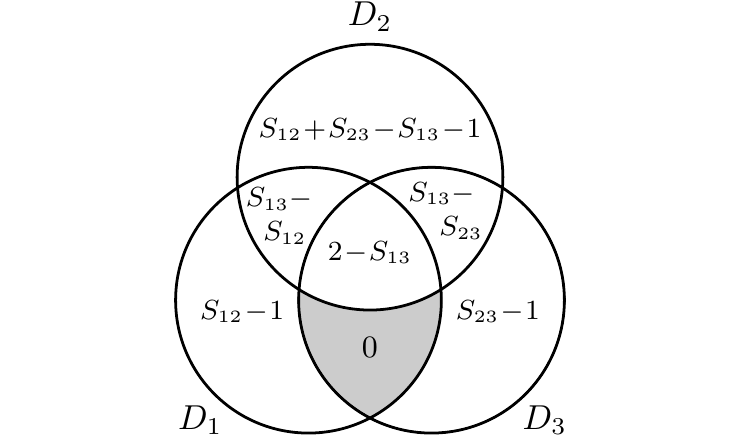} 
    \caption{Entropy Venn diagram for the sequence of detectors $D_1$, $D_2$, and $D_3$ that observe (amplify) the quantum ancillae $A_1$, $A_2$, and $A_3$, according to~\eqref{rhoA'B'C'}. Only amplification of the intermediate ancilla $A_2$ is sufficient to destroy the coherence in the original state~\eqref{rhoABC}. Note that all the pairwise entropies are unchanged by amplification, $S(A_iA_j) = S(D_iD_j) = S_{ij}$.}
  \label{fig:venn-ABC-markov}
 \end{figure}

 We apply these results to the case of qubit measurements ($d=2$), which are implemented with the rotation matrix in Eq.~\eqref{rotation}. For three consecutive measurements with $\theta_2 = \theta_3 = \pi/4$, the joint density matrix of all three detectors, which we show for comparison to the unamplified state~\eqref{three}, is diagonal:
  \begin{equation}\label{three-markov}
    \rho(D_1D_2D_3) = \frac{1}{8} 
    \begin{pmatrix}
    ~\mathbbm{1} & 0           & 0      	& 0 ~  \\
    ~0		 & \mathbbm{1} & 0           	& 0 ~  \\ 
    ~0           & 0           & \mathbbm{1}  	& 0 ~  \\
    ~0           & 0           & 0    		& \mathbbm{1}~ 
    \end{pmatrix}.
 \end{equation}
 As with the unamplified state~\eqref{three}, the pairwise entropies for the detectors are also 2 bits. However, the tripartite entropy has increased to $S(D_1D_2D_3) = 3$ bits from the 2 bits we found for $S(A_1A_2A_3)$. Compare the resulting entropy Venn diagram in Fig.~\ref{fig:venn-abc-qubit-markov} to the diagram in Fig.~\ref{fig:venn-abc-qubit-nonmarkov} obtained for unamplified qubit measurements.  
 
 The difference between the unamplified density matrix $\rho(A_1A_2A_3)$ in Eq.~\eqref{rhoABC} and the amplified state $\rho(D_1D_2D_3)$ in Eq.~\eqref{rhoA'B'C'} can be ascertained by revealing the off-diagonal terms via quantum state tomography (see, e.g.,~\cite{Whiteetal1999}), by measuring just a single moment~\cite{Tanaka2014} of the density matrix, such as ${\rm Tr} [\rho(A_1A_2A_3)^2]$, or else by direct measurement of the wavefunction~\cite{Lundeenetal2011}.

 \begin{figure} 
    \centering
    \includegraphics[width=\linewidth]{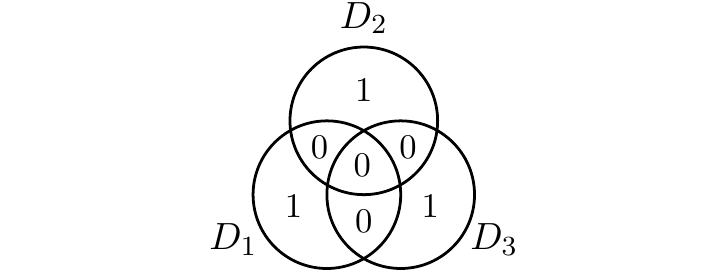} 
    \caption{Entropy Venn diagram for amplification with detectors $D_1$, $D_2$, and $D_3$ in~\eqref{three-markov} of three qubit ancillae that measured an unprepared quantum system. Ancilla $A_2$ measured $Q$ at $\theta_2=\pi/4$ relative to the basis of $A_1$, while $A_3$ measured $Q$ at $\theta_3=\pi/4$ relative to $A_2$. In this case, all three detectors are uncorrelated. Amplification of $A_2$ with $D_2$ alone is sufficient to destroy the coherence in~\eqref{three}.}
  \label{fig:venn-abc-qubit-markov}
 \end{figure}
 
 The results in the two preceding sections are compatible with the usual formalism for orthogonal measurements~\cite{Peres1995,Holevo2011}, where the conditional probability $p(x_2|x_1)$ to observe outcome $x_2$, given that the previous measurement yielded outcome $x_1$, is given by
 \begin{equation}\label{cond}
    p(x_2|x_1) = |U^{(2)}_{x_1x_2}|^2\;. 
 \end{equation}
 Indeed, our findings thus far are fully consistent with a picture in which a measurement collapses the quantum state (or alternatively, where a measurement recalibrates an observer's ``catalogue of expectations"~\cite{Schroedinger1935,Englert2013,Fuchsetal2014}). 
 
 To see this, we write the joint density matrix $\rho(D_1D_2)$, found by tracing~\eqref{rhoA'B'C'} over $D_3$, in the collapse picture. For a detector $D_1$ that records outcome $x_1$ with probability $1/d$ and a detector $D_2$ that measures the same quantum state (at an angle determined by the rotation matrix $U^{(2)}$), the resulting density matrix is
 \begin{equation}
    \wt{\rho}\,(D_1D_2)=\frac1d\sum_{x_1} |x_1\ra \la x_1| \otimes \rho^{x_1}_{D_2} \, ,
 \end{equation}
 where the state $\rho^{x_1}_{D_2}$ of $D_2$ is defined using the projection operators $P_{x_1} \!=\! |x_1\ra\la x_1|$ on the state of $D_1$,
 \begin{equation} 
    \rho^{x_1}_{D_2}\!=\! \frac{{\rm Tr}_{D_1}\!\Big[P_{x_1}\, \rho(D_1D_2) P_{x_1}^\dagger\Big]}{ {\rm Tr}_{D_1D_2}\!\left[\!P_{x_1} \, \rho(D_1D_2)P_{x_1}^\dagger\!\right]} =\sum_{x_2} |U^{(2)}_{x_1x_2}|^2 \, |x_2\ra\la x_2|.
 \end{equation}
 In other words, the state $\rho(D_1D_2)$ that was obtained in a unitary formalism is equivalent to the collapse version $\wt{\rho}\,(D_1D_2)$. However, despite these consistencies with the collapse picture, we emphasize that the actual measurements induce no irreversible collapse and that all amplitudes in the underlying pure-state wavefunction~\eqref{QRADADAD} are preserved and evolve unitarily throughout the measurement process.

 \subsection{Quantum Markov chains}
 \label{sec:markov-chain-proof}
 One of the key differences between the entropy Venn diagrams in Figs.~\ref{fig:venn-ABC-nonmarkov} and~\ref{fig:venn-ABC-markov} is the vanishing conditional mutual entropy~\cite{CerfAdami1998} for amplified measurements, $S(D_1:D_3|D_2) = 0$. Before amplification, the equivalent quantity for the quantum ancillae is in general non-zero, $S(A_1:A_3|A_2) \ge 0$. Evidently, the intermediate measurement with $D_2$ has, from the perspective of $D_2$ (meaning, given the state of $D_2$) {\em erased all correlations} between the first detector $D_1$ and the last detector $D_3$ in the measurement sequence. The vanishing of the conditional mutual entropy is precisely the condition that is fulfilled by quantum Markov chains as we will outline below.
 
 Using the results for unprepared quantum states (this holds equally for prepared quantum states), we demonstrate that the chain of detectors, $D_1, D_2, D_3$, which consecutively measured a quantum system is Markovian, as defined in~\cite{HaydenMarkov2004} (see also~\cite{Datta2015} and references therein). We prove later in this section in Theorem~\ref{theorem-markov} that this result can be extended to any number of consecutive measurements, not just three. To show that $S(D_1\!:\!D_3|D_2)$ is indeed zero, we compute the joint entropy $S(D_1D_2D_3)$ of all three detectors. From Eq.~\eqref{rhoA'B'C'}, we find
 \begin{equation}
 \begin{split}\label{entropy_A'B'C'}
    S(D_1D_2D_3) = 1 & -  \frac{1}{d} \sum_{x_1x_2} |U^{(2)}_{x_1x_2}|^2  \log_d |U^{(2)}_{x_1x_2}|^2 \\
                   & - \frac{1}{d}  \sum_{x_2x_3}  |U^{(3)}_{x_2x_3}|^2  \log_d |U^{(3)}_{x_2x_3}|^2 ,
 \end{split} 
 \end{equation}
 or, $S(D_1D_2D_3) = S(D_1) + S(D_2|D_1) + S(D_3|D_2)$. However, using the chain rule for entropies~\cite{CerfAdami1998}, the tripartite entropy can also be written generally in the form $S(D_1D_2D_3) = S(D_1) + S(D_2|D_1) + S(D_3|D_2D_1)$. From these two expression, we see immediately that
 \begin{equation}
        S(D_3|D_2D_1) = S(D_3|D_2).
 \end{equation}
 Thus, the entropy of the detector $D_3$ is not reduced by conditioning on more than the state of the previous detector $D_2$. This is the Markov property for entropies~\cite{HaydenMarkov2004,Datta2015}. 
 
 The Markov property further implies that detectors $D_1$ and $D_3$ are independent from the perspective of $D_2$, since the conditional mutual entropy~\cite{CerfAdami1998} vanishes (see the grey region in Fig.~\ref{fig:venn-ABC-markov}),
 \begin{equation}
    S(D_1\!:\!D_3|D_2) =  S(D_3|D_2) - S(D_3|D_2D_1) = 0 \,.
 \end{equation} 
 This result is consistent with the notion that the measurement with $D_2$ collapsed the state of the wavefunction, erasing any (conditional) information that detector $D_3$ could have had about the prior measurement with $D_1$. The conditional mutual entropy does not vanish for unamplified measurements, $S(A_1:A_3|A_2) \ge 0$, reflecting the fundamentally non-Markovian nature of the chain of quantum ancillae. In other words, as long as the measurement chain remains unamplified (for example, the $A_2$ subsystem in~\eqref{rhoABC}), the intermediate measurement does not erase the correlations between $A_1$ and $A_3$ (compare the gray region in Fig.~\ref{fig:venn-ABC-markov} to the same region in Fig.~\ref{fig:venn-ABC-nonmarkov}). 
 
 We now provide a formal proof of the statement that the chain of detectors that amplified the quantum ancillae is equivalent to a quantum Markov chain.

 \begin{thm}\label{theorem-markov}
 A set of consecutive quantum measurements is non-Markovian until it is amplified. Specifically, the sequence of devices $D_i, \ldots ,D_j$, with $i < j$, that measure (amplify) the quantum ancillae $A_i,\ldots, A_j$ (which themselves measured a quantum system $Q$) forms a quantum Markov chain:
  \begin{equation}
      S(D_j|D_{j-1} \ldots D_i) = S(D_j|D_{j-1}).
  \end{equation} 
 \end{thm}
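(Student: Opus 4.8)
\emph{Proof proposal.} The plan is to reduce the quantum statement to an elementary fact about classical Markov chains, by showing that amplification turns the joint detector density matrix into a diagonal matrix whose eigenvalues factorize as a classical Markov chain. First I would write down the pure state of the whole amplified sequence, generalizing~\eqref{QRADADAD}: after $A_k$ measures $Q$ in its basis and $D_k$ amplifies $A_k$ by measuring it in the same basis, the joint wavefunction carries the factor $|x_k\ra_{A_k}|x_k\ra_{D_k}$ for every $k$, so each pair $A_kD_k$ is perfectly (classically) correlated. I would then compute $\rho(D_i\ldots D_j)$ by tracing out $Q$ (and the reference $R$ in the unprepared case), every ancilla $A_k$, and every detector outside the block $[i,j]$.

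The key step is to observe that, because $A_k$ carries a perfect copy of the index $x_k$ that also labels $D_k$, tracing over $A_k$ acts on the $D_k$ factor as a complete dephasing (von Neumann) channel: each off-diagonal contribution $|x_k\ra\la x_k'|$ with $x_k\neq x_k'$ is annihilated. Performing this for all $k\in[i,j]$, and using the unitarity relations~\eqref{eqn:unitarity} to carry out the remaining traces over $Q$, $R$ and the outside ancillae, collapses the joint detector state to the diagonal form
\begin{equation}
  \rho(D_i\ldots D_j)=\sum_{x_i\ldots x_j} p(x_i,\ldots,x_j)\,|x_i\ldots x_j\ra\la x_i\ldots x_j|,
\end{equation}
with $p(x_i,\ldots,x_j)=q^{(i)}_{x_i}\,|U^{(i+1)}_{x_ix_{i+1}}|^2\cdots|U^{(j)}_{x_{j-1}x_j}|^2$, generalizing~\eqref{rhoA'B'C'}. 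I expect this decoherence-by-tracing argument to be the main obstacle: making it airtight for an arbitrary chain length and dimension $d$, and keeping careful track of how the traces over $Q$, $R$ and the out-of-block ancillae interact with the unitarity sums, is the only nontrivial bookkeeping; everything downstream is classical.

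Finally, since $p(x_i,\ldots,x_j)$ factorizes through the transition kernel $p(x_k|x_{k-1})=|U^{(k)}_{x_{k-1}x_k}|^2$ (consistent with~\eqref{cond}), the sequence $x_i\to x_{i+1}\to\cdots\to x_j$ is a classical Markov chain, so $p(x_j|x_{j-1}\ldots x_i)=p(x_j|x_{j-1})$. Because $\rho(D_i\ldots D_j)$ is diagonal, every von Neumann entropy equals the Shannon entropy of the corresponding marginal of $p$; applying the entropy chain rule $S(D_j|D_{j-1}\ldots D_i)=S(D_i\ldots D_j)-S(D_i\ldots D_{j-1})$ and substituting the factorized $p$ makes all but the last conditional term cancel, yielding $S(D_j|D_{j-1}\ldots D_i)=S(D_j|D_{j-1})$. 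The same computation goes through verbatim for prepared states with $q^{(i)}_{x_i}$ the prepared marginal of $A_i$, so an amplified chain of any length is Markovian, whereas the corresponding unamplified chain retains the bulk coherence exhibited in Theorem~\ref{theorem-cq-unprepared} and is therefore not.
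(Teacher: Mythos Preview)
Your proposal is correct and follows essentially the same route as the paper: derive the diagonal detector state $\rho(D_i\ldots D_j)$ with the factorized probability $q^{(i)}_{x_i}\,|U^{(i+1)}_{x_ix_{i+1}}|^2\cdots|U^{(j)}_{x_{j-1}x_j}|^2$, then use the classical Markov property of that probability to obtain the entropic Markov identity. The one notable difference is that you make explicit the mechanism by which amplification produces the diagonal state (tracing over the perfectly correlated $A_k$ dephases $D_k$), whereas the paper simply writes down the diagonal result~\eqref{eqn:jointcollapse} and then computes $S(D_i\ldots D_j)$ and $S(D_{j-1}D_j)$ term by term; your argument is slightly more conceptual, the paper's slightly more computational, but the logic is the same.
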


 \begin{proof}
 We first show that the Markov property of probabilities implies the Markov property for entropies (see, e.g., Refs.~\cite{HaydenMarkov2004,Datta2015}). If consecutive measurements on a quantum system can be modeled as a Markov process, the probability to observe outcome $x_j$ in the $j$th detector, conditional on previous measurement outcomes, depends only on the last outcome $x_{j-1}$,
 \begin{equation}\label{eqn:markov}
    p(x_j|x_{j-1} \ldots x_i) = p(x_j|x_{j-1}) . 
 \end{equation}
 Inserting Eq.~(\ref{eqn:markov}) into the expression for the conditional entropy~\cite{CerfAdami_PRL1997} gives 
 \begin{equation}\begin{split}
    S(D_j|D_{j-1} \!\ldots D_i)\! & =\! - \!\!\!\! \sum_{\substack{ x_i \ldots x_j }} \!\!\! p(x_i \ldots x_j) \log_d  p(x_j|x_{j-1} \!\ldots x_i)  \\
                          & =\! - \!\!\!\! \sum_{\substack{ x_i \ldots x_j }} \!\!\! p(x_i \ldots x_j)  \log_d  p(x_j|x_{j-1}) \,.
 \end{split}\end{equation}
 A partial summation over the joint probability distribution gives 
 \begin{equation}
    p(x_{j-1}x_j) \,= \!\!\!\!\! \sum_{x_i\cdots x_{j-2}} \!\!\! p(x_i \ldots x_j),
 \end{equation} 
 so that the entropic condition satisfied by a quantum Markov chain is
 \begin{equation}\begin{split}\label{eqn:MarkovEntropy}
    \!S(D_j|D_{j-1} \ldots D_i) & =  - \!\!\! \sum_{x_{j-1}x_j} \!\! p(x_{j-1}x_j )  \log_d p(x_j|x_{j-1}) \\
                          & = S(D_j|D_{j-1}) , .
 \end{split}\end{equation}
 
 We now show that the chain of amplified measurements satisfies the entropic Markov property~\eqref{eqn:MarkovEntropy}. For $n$ consecutive measurements, the state $|\Psi\ra = |Q A_1 \ldots A_n\ra$ of $Q$ and all ancillae is given by
 \begin{equation}
        |\Psi\ra = \!\!\!\sum_{x_1 \cdots x_n} \!\!\alpha^{(i)}_{x_i} \, U^{(2)}_{x_1x_2} \ldots U^{(n)}_{x_{n-1}x_n} \, |\wt{x}_n \, x_1 \ldots x_n\ra.
 \end{equation}
 After amplifying this state, we find that the density matrix for the joint set of sequential detectors, $D_i, \ldots, D_j$, with $i < j$, is diagonal, as expected,
 \begin{equation}\label{eqn:jointcollapse}\begin{split}
    \rho(D_i \ldots D_j)& =  \sum_{\substack{ x_i }} q^{(i)}_{x_i} \, |x_i\ra\la x_i| \\
      & \, \otimes  \sum_{\substack{ x_{i+1} }} |U^{(i+1)}_{x_i x_{i+1}}|^2 \, |x_{i+1}\ra\la x_{i+1}|  \\
      \cdots & \, \otimes \sum_{\substack{ x_j }} |U^{(j)}_{x_{j-1} x_j}|^2 \, |x_j \ra \la x_j |.
 \end{split}\end{equation}
 The probability distribution $q^{(i)}_{x_i}$ of the $i$th device can be obtained from~\eqref{diagonalize-prepared-general-normalization}. The entropy of~\eqref{eqn:jointcollapse} is 
 \begin{equation}\label{eqn:collapse_entropy_n} 
 \begin{split} 
    \hspace{-0.3cm} S(D_i \ldots D_j) \!  = & - \!\!\!\!\!\! \sum_{\substack{ x_i \ldots x_{j-1} }} \!\!\!\!\! \Big(q^{(i)}_{x_i} \,|U^{(i+1)}_{x_i x_{i+1}}|^2  \ldots |U^{(j-1)}_{x_{j-2} x_{j-1}}|^2 \Big)\\
	& ~~\times \log_d \! \Big( q^{(i)}_{x_i}\, |U^{(i+1)}_{x_i x_{i+1}}|^2 \ldots |U^{(j-1)}_{x_{j-2} x_{j-1}}|^2\Big) \\
	& - \!\!\!\! \sum_{\substack{ x_{j-1} x_j }}  \!\!\! q^{(j-1)}_{x_{j-1}}\, |U^{(j)}_{x_{j-1} x_j}|^2 \log_d |U^{(j)}_{x_{j-1} x_j}|^2, \!
 \end{split} 
 \end{equation}
 where $q^{(j-1)}_{x_{j-1}}$ is the probability distribution of $D_{j-1}$. The first term in Eq.~\eqref{eqn:collapse_entropy_n} is just the joint entropy $S(D_i \ldots D_{j-1})$, so that the entropy of the $j$th detector, conditional on the previous detectors, is
 \begin{equation}\label{eqn:collapse_cond1}
 \begin{split}
    S( D_j | D_{j-1} \!\ldots D_i) & \!= S(D_i \ldots D_j) - S(D_i \ldots D_{j-1}) \\
	  & \!=\! - \!\!\!\! \sum_{\substack{ x_{j-1} x_j }} \!\!\! q^{(j-1)}_{x_{j-1}} |U^{(j)}_{x_{j-1} x_j}|^2 \log_d |U^{(j)}_{x_{j-1} x_j}|^2\!. 
 \end{split}
 \end{equation}

 All that remains is to show that~\eqref{eqn:collapse_cond1} is equal to $S(D_j|D_{j-1})$. A simple calculation using the density matrix for two amplified consecutive measurements with $D_{j-1}$ and $D_j$,
 \begin{equation} 
    \rho(D_{j-1}D_j) = \!\!\! \sum_{x_{j-1}x_j} \! q^{(j-1)}_{x_{j-1}} \,  |U^{(j)}_{x_{j-1}x_j}|^2 \, |x_{j-1}x_j\ra\la x_{j-1}x_j|,
 \end{equation} 
 yields the joint entropy,
 \begin{equation}
 \begin{split}
    S(D_{j-1} D_j) =  & - \! \sum_{\substack{ x_{j-1} }} q^{(j-1)}_{x_{j-1}} \log_d  q^{(j-1)}_{x_{j-1}}  \\
                   & - \!\!\! \sum_{\substack{ x_{j-1} x_j }} \!\!  q^{(j-1)}_{x_{j-1}} |U^{(j)}_{x_{j-1} x_j}|^2 \log_d |U^{(j)}_{x_{j-1} x_j}|^2 .
 \end{split}
 \end{equation} 
 The first term in this expression is the entropy of $D_{j-1}$ (all marginal density matrices and entropies are the same for amplified and unamplified ancillae; this is proved formally later in Lemma~\ref{lemma-AiDi} of Sec.~\ref{sec:info-about-quantum-system}),
 \begin{equation}
 S(D_{j-1}) = H[q^{(j-1)}] = - \sum_{\substack{ x_{j-1}}}  q^{(j-1)}_{x_{j-1}}  \log_d q^{(j-1)}_{x_{j-1}}.
 \end{equation} 
 The conditional entropy $S(D_j|D_{j-1})$ is thus
 \begin{equation}
 \begin{split}
    S(D_j|D_{j-1}) & = S(D_{j-1}D_j) - S(D_{j-1}) \\
              & = -  \!\!\!\! \sum_{\substack{ x_{j-1} x_j }}  \!\! q^{(j-1)}_{x_{j-1}} |U^{(j)}_{x_{j-1} x_j}|^2 \log_d |U^{(j)}_{x_{j-1} x_j}|^2, \!
 \end{split}
 \end{equation}
  which is the same as~\eqref{eqn:collapse_cond1}.
 \end{proof}
 We emphasize that the result that amplified measurements are Markovian holds for measurements of unprepared as well as prepared quantum states. 
  
 \begin{cor}\label{corollary-markov}
 The Markovian nature of amplified measurements implies that the detectors $D_{i}$ and $D_{j}$ share no entropy (are independent) from the perspective of the intermediate detectors, $D_{i+1}, \ldots, D_{j-1}$, since the conditional mutual entropy vanishes:
 \begin{equation}
      S(D_{i} : D_{j}|D_{i+1} \ldots D_{j-1})=0.
 \end{equation} 
 \end{cor}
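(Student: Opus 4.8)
The plan is to obtain the vanishing conditional mutual entropy directly from the entropic Markov property of Theorem~\ref{theorem-markov}, using only the definition of the conditional mutual entropy~\cite{CerfAdami1998} and the chain rule for entropies.

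First I would expand the quantity in the statement as a difference of two conditional entropies,
\begin{equation}
 S(D_{i}:D_{j}|D_{i+1}\ldots D_{j-1}) = S(D_{j}|D_{i+1}\ldots D_{j-1}) - S(D_{j}|D_{i}D_{i+1}\ldots D_{j-1}).
\end{equation}
The second term is the entropy of the last detector conditioned on its entire recorded past, so Theorem~\ref{theorem-markov} applied to the chain $D_{i},\ldots,D_{j}$ gives $S(D_{j}|D_{i}\ldots D_{j-1}) = S(D_{j}|D_{j-1})$ at once.

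Next I would show that the first term collapses to the same expression. The point is that a consecutive block $D_{i+1},\ldots,D_{j}$ of amplified measurements is again a chain of amplified measurements of exactly the type analyzed in Theorem~\ref{theorem-markov} (its proof holds for arbitrary endpoints), so the entropic Markov property holds with the starting index shifted by one: $S(D_{j}|D_{i+1}\ldots D_{j-1}) = S(D_{j}|D_{j-1})$. Subtracting the two equal quantities yields $S(D_{i}:D_{j}|D_{i+1}\ldots D_{j-1})=0$. Equivalently, one may write $S(D_{i}:D_{j}|D_{i+1}\ldots D_{j-1}) = S(D_{i}\ldots D_{j-1}) + S(D_{i+1}\ldots D_{j}) - S(D_{i+1}\ldots D_{j-1}) - S(D_{i}\ldots D_{j})$, expand each joint entropy by the chain rule, replace every $S(D_{k}|D_{k-1}\ldots)$ by $S(D_{k}|D_{k-1})$ using Theorem~\ref{theorem-markov}, and check that all the surviving conditional entropies cancel in pairs.

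The only step needing care --- the ``hard'' part, such as it is --- is the justification that the Markov property holds for every consecutive sub-chain, i.e.\ that $p(x_{j}|x_{j-1}\ldots x_{k}) = p(x_{j}|x_{j-1})$ for all $k$ with $i\le k\le j-1$; this is just Eq.~\eqref{eqn:markov} read for the smaller conditioning set, and it is precisely what makes both conditional entropies above reduce to $S(D_{j}|D_{j-1})$. Everything else is a one-line manipulation, so no genuine obstacle arises.
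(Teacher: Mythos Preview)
Your proof is correct and follows essentially the same route as the paper: expand the conditional mutual entropy as $S(D_{j}|D_{i+1}\ldots D_{j-1}) - S(D_{j}|D_{i}D_{i+1}\ldots D_{j-1})$, apply Theorem~\ref{theorem-markov} to each term to reduce both to $S(D_{j}|D_{j-1})$, and conclude that the difference vanishes. Your extra remark about the Markov property holding for every consecutive sub-chain is a welcome bit of care that the paper leaves implicit.
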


 \begin{proof}
 The conditional mutual entropy is defined~\cite{CerfAdami1998} as a difference between two conditional entropies,
 \begin{equation}\begin{split} 
      S(D_{i} : D_{j}|D_{i+1} \ldots D_{j-1}) & = S(D_{j}|D_{j-1} \ldots D_{i+1}) \\
       & -S(D_{j}|D_{j-1} \ldots D_{i}). 
  \end{split}\end{equation}
 From Theorem~\ref{theorem-markov}, the two quantities on the right hand side of this expression are both equal to $S(D_j|D_{j-1})$. Therefore the conditional mutual entropy vanishes~\cite{HaydenMarkov2004}.
 \end{proof}
 
 For three detectors, the Markov property is
 \begin{equation} 
        S(D_{i-1} : D_{i+1}|D_i) = S(D_{i+1}|D_i) - S(D_{i+1}|D_iD_{i-1}) = 0. 
 \end{equation}        
 We see that, from the strong subadditivity (SSA) of quantum entropy~\cite{Lieb1973a,Lieb1973b}, 
 \begin{equation} 
    S(D_{i+1}|D_iD_{i-1}) \le S(D_{i+1}|D_i),
 \end{equation} 
 amplified measurements satisfy SSA with equality.

 The previous theorem established that the sequence of amplified measurements is a quantum Markov chain. Now, we will demonstrate that unamplified measurements are non-Markovian. In the following calculation, we use the state~\eqref{eqn:UnpreparedFullWavefunction} for measurements of unprepared quantum states for simplicity. We will find that the Markov property~\eqref{eqn:MarkovEntropy} is violated in this case, so that in general unamplified measurements are non-Markovian. 
 
 First, consider the joint density matrix for the sequence of quantum ancillae $A_i, \ldots, A_j$ (with $i < j$), similarly to~\eqref{eqn:UnpreparedFullWavefunction}. As in Eq.~\eqref{eqn:cq-arbitrary}, we find
 \begin{equation} \label{eqn:rho_i_to_j}
        \rho(A_i \!\ldots A_j) \!=\! \frac{1}{d} \! \sum_{x_i x_j}  \!p^{(ij)}_{x_ix_j}   |x_i\ra\la x_i| \otimes |\phi_{x_ix_j}\ra\la \phi_{x_ix_j}| \otimes |x_j\ra\la x_j| ,
 \end{equation}
 where the coefficients $p^{(ij)}_{x_ix_j} = |\beta^{(ij)}_{x_ix_j}|^2$ and the normalized, but non-orthogonal states $|\phi_{x_i x_j}\ra$ were defined in Eq.~\eqref{diagonalize-unprepared-general}. The joint states $|x_i \, \phi_{x_ix_j} \, x_j \ra$ are orthonormal, so the entropy of Eq.~\eqref{eqn:rho_i_to_j} is simply 
  \begin{equation}\label{eqn:entropy1}
        S(A_i \ldots A_j) = 1 - \frac{1}{d} \sum_{x_i x_j} p^{(ij)}_{x_ix_j} \log_d p^{(ij)}_{x_ix_j}.
 \end{equation} 
 The coefficients $p^{(ij)}_{x_ix_j}$ can be equivalently expressed in terms of $U^{(j)}$ as
 \begin{equation}
    p^{(ij)}_{x_ix_j} = |\beta^{(ij)}_{x_ix_j}|^2 = \sum_{x_{j-1}} p^{(i,j-1)}_{x_ix_{j-1}} ~  |U^{(j)}_{x_{j-1}x_j}|^2.
 \end{equation} 
 Inserting this into~\eqref{eqn:entropy1} and using the log-sum inequality~\footnote{The log-sum inequality~\protect{\cite{Cover2012}} states that for non-negative numbers $a_1,\ldots,a_d$ and $b_1,\ldots,b_d$,
 {\protect \[
   \sum_{x_i=1}^d a_{x_i} \log \frac{a_{x_i}}{b_{x_i}} \ge \left( \sum_{{x_i}=1}^d a_{x_i} \right) \log \frac{\sum_{{x_i}=1}^d a_{x_i}}{\sum_{{x_i}=1}^d b_{x_i}} ~,\]
 }
 with equality if and only if $a_{x_i}/b_{x_i} = $ const.} with $b_{x_{j-1}}=1$ and $a_{x_{j-1}} = p^{(i,j-1)}_{x_i x_{j-1}} \, |U^{(j)}_{x_{j-1}x_j}|^2$, we find that the joint entropy is bounded from below by
 \begin{equation}
 \begin{split}\label{eqn:entropy2}
    S(A_i \ldots A_j) \ge & - \frac{1}{d} \sum_{x_i x_{j-1}} p^{(i,j-1)}_{x_ix_{j-1}} \log_d p^{(i,j-1)}_{x_ix_{j-1}} \\
        & - \frac{1}{d} \sum_{x_{j-1} x_j} |U^{(j)}_{x_{j-1}x_j}|^2 \log_d |U^{(j)}_{x_{j-1}x_j}|^2.
 \end{split}
 \end{equation}
 The first term on the right hand side of Eq.~\eqref{eqn:entropy2} is simply $S(A_i \ldots A_{j-1}) - 1$, while the second term is $S(A_{j-1}A_j) - 1$. Given that $S(A_{j-1}) = 1$, it is straightforward to show that Eq.~\eqref{eqn:entropy2} can be rewritten as a difference between two conditional entropies,
 \begin{equation}
 \label{eqn:entropy3}
   S(A_j|A_{j-1}) - S(A_j|A_{j-1} \ldots A_i ) \le 1,
 \end{equation}
 with equality only when $p^{(i,j-1)}_{x_i x_{j-1}} \, |U^{(j)}_{x_{j-1}x_j}|^2$ is a constant. This occurs when $|U^{(j)}_{x_{j-1}x_j}|^2 = 1/d$ and $|U^{(\ell)}_{x_{\ell-1}x_\ell}|^2 = 1/d$ for one or more of the $\ell = i+1, \ldots, j-1$ matrices. This shows that conditioning on more than just the state of the last ancilla $A_{j-1}$ will reduce the conditional entropy of ancilla $A_j$ (by at most 1). Since Eq.~\eqref{eqn:entropy3} is not equal to zero in general, we conclude that the sequence of unamplified measurements is non-Markovian.
 
\section{Effects of amplifying quantum measurements}
\label{sec:five}
 In the previous sections~\ref{sec:three} and~\ref{sec:markov}, we focused on consecutive measurements of a quantum system and discussed the concepts of non-Markovian (unamplified) and Markovian (amplifiable) sequences, respectively. It is reasonable to ask whether there are entropic relationships between those two kinds of measurements. Introducing a second step to von Neumann's second stage serves precisely to establish such relationships. In this section, we establish the following three properties: Markovian detectors carry less information about the quantum system than non-Markovian devices; the shared entropy between consecutive non-Markovian devices is larger than the respective quantity for amplified measurements; the last Markovian detector in a quantum chain is inherently more random than its non-Markovian counterpart, given the combined results of all previous measurements.

 \subsection{Information about the quantum system}
 \label{sec:info-about-quantum-system}
 We first calculate how much information about the quantum system, $Q$, is encoded in the last device in a chain of consecutive measurements of $Q$. To do this, we prove two Lemmas that state that the marginal entropy of the quantum system is always equal to the entropy of the last ancilla in the chain of measurements, and that the marginal entropy of a quantum ancilla is unaffected by amplification.

 \begin{lem}\label{lemma-Q}
 The entropy of the quantum system, $Q$, is equal to the entropy of the last ancilla, $A_n$, in the chain of measurements:
 \begin{equation}
    S(Q) = S(A_n).
 \end{equation} 
 \end{lem}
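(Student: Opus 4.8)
The plan is to reduce the statement to the purity of the global wavefunction together with Theorem~\ref{theorem-cq-prepared}. For a prepared quantum system the state $|\Psi\ra = |QA_1\ldots A_n\ra$ of Eq.~\eqref{eqn:PreparedFullWavefunction} is pure, so across the bipartition $Q : A_1\ldots A_n$ the two marginal entropies agree, $S(Q) = S(A_1\ldots A_n)$. Theorem~\ref{theorem-cq-prepared} already supplies $S(A_1\ldots A_n) = S(A_n)$, and composing the two equalities gives $S(Q) = S(A_n)$.

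I would also exhibit this directly at the level of density matrices, since that version handles unprepared states in the same breath. Recasting $|\Psi\ra$ with the non-orthogonal states $|\psi_{x_n}\ra$ of Eq.~\eqref{diagonalize-prepared-general} puts it in the form $|\Psi\ra = \sum_{x_n}\alpha^{(n)}_{x_n}\,|\wt x_n\,\psi_{x_n}\,x_n\ra$, with $|\wt x_n\ra$ the state of $Q$ written in the last ancilla's eigenbasis. Because $|\psi_{x_n}\ra\otimes|x_n\ra$ is orthonormal in $x_n$, tracing out all the ancillae (and, in the unprepared case, also the reference $R$) kills the cross terms and leaves $\rho(Q) = \sum_{x_n} q^{(n)}_{x_n}\,|\wt x_n\ra\la\wt x_n|$; tracing $A_1\ldots A_{n-1}$ out of the classical--quantum state~\eqref{eqn:cq-prepared} likewise gives $\rho(A_n) = \sum_{x_n} q^{(n)}_{x_n}\,|x_n\ra\la x_n|$. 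These two matrices are diagonal with the same list of eigenvalues $q^{(n)}_{x_n}$, hence $S(Q) = S(A_n) = H[q^{(n)}]$.

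For an unprepared quantum system the identical manipulation applies with $\alpha^{(1)}_{x_1}\to 1/\sqrt d$, whence $q^{(n)}_{x_n} = 1/d$ and both entropies equal one dit; equivalently, $\rho(Q) = \frac1d\,\mathbbm{1}$ from Eq.~\eqref{QR} is untouched by the subsequent measurements and $\rho(A_n) = \frac1d\,\mathbbm{1}$ as well. The only point requiring care is the bookkeeping: checking that after $n$ measurements $Q$ is still carried in the $|\wt x_n\ra$ basis and becomes perfectly correlated with $A_n$ once the intervening systems are traced over, i.e.\ that orthonormality of the $|x_n\ra$ (and of $|\psi_{x_n}\ra\otimes|x_n\ra$) is enough to remove all coherences from $\rho(Q)$. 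I do not expect an analytic obstacle here; the Lemma is essentially a corollary of Theorem~\ref{theorem-cq-prepared} combined with Schmidt/purity reasoning.
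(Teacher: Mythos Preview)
Your proposal is correct and mirrors the paper's own proof almost exactly: the paper first traces out the ancillae from $|\Psi\ra$ to obtain $\rho(Q)=\sum_{x_n} q^{(n)}_{x_n}\,|\wt x_n\ra\la\wt x_n|$ and reads off $S(Q)=S(A_n)=H[q^{(n)}]$, and then offers precisely your Schmidt/purity argument combined with $S(A_1\ldots A_n)=S(A_n)$ as an alternative derivation. The only cosmetic difference is ordering---you lead with the bipartition argument and then do the direct trace, while the paper does the reverse.
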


 \begin{proof}
 Consider a series of consecutive measurements on a quantum system, $Q$, with $n$ ancillae. In general, following the measurements, the joint state of the quantum system and all ancillae $|\Psi\ra = |Q A_1 \ldots A_n \ra$ is given by the pure state [see also Eq.~\eqref{eqn:PreparedFullWavefunction}]
 \begin{equation}\label{eqn:FullWavefunction}
    |\Psi\ra = \!\!\!\sum_{ x_1 \ldots x_n } \!\! \alpha^{(1)}_{x_1} ~ U^{(2)}_{x_1 x_2}  \ldots  U^{(n)}_{x_{n-1} x_n} ~ |\wt{x}_n \,\,x_1 \ldots x_n \ra .
 \end{equation}
 The density matrix for the quantum system is found by tracing out all ancilla states from the full density matrix associated with~\eqref{eqn:FullWavefunction},
  \begin{equation}
  \label{rho-Q}
        \rho(Q) = \mathrm{Tr}_{A_1 \ldots A_n} (|\Psi\ra\la\Psi|) = \sum_{x_n} q^{(n)}_{x_n} ~ |\wt{x}_n\ra\la \wt{x}_n|,
  \end{equation} 
  where $q^{(n)}_{x_n}$ is the probability distribution for the last ancilla $A_n$ that can be obtained generally from Eq.~\eqref{diagonalize-prepared-general-normalization}. Clearly, \eqref{rho-Q} is equivalent to the density matrix for the last ancilla, and so the corresponding entropies are the same: $S(Q) = S(A_n) = H[q^{(n)}]$. An alternative proof is to note that a Schmidt decomposition of the pure state $|\Psi\ra\la\Psi|$ implies that $S(Q) = S(A_1 \ldots A_n)$. And, by Theorem~\ref{theorem-chain} (see Sec.~\ref{sec:info-past-measurements}), $S(A_n) = S(A_1 \ldots A_n)$, so that $S(Q) = S(A_n)$.
 \end{proof}

 \begin{lem}\label{lemma-AiDi}
 The entropy of a quantum ancilla, $A_i$, is unchanged if it is measured by an amplifying detector, $D_i$, so that for all $i$ in the chain of measurements:
 \begin{equation} 
    S(A_i) = S(D_i).
 \end{equation}
 \end{lem}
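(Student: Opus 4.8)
The plan is to recognize amplification as a controlled-copy unitary acting in the pointer basis of $A_i$, and to exploit the fact that the marginal state $\rho(A_i)$ is \emph{already} diagonal in precisely that basis, so that the copy operation hands $D_i$ a density matrix with an identical spectrum. First I would write down the amplifying interaction explicitly: in analogy with the entangling operator~\eqref{entangling-operator}, the detector $D_i$ measures the ancilla $A_i$ in $A_i$'s own basis through a unitary $U_{A_iD_i}=\sum_{x_i}|x_i\rangle\langle x_i|_{A_i}\otimes U_{x_i}$ that sends $|x_i\rangle_{A_i}|0\rangle_{D_i}\mapsto|x_i\rangle_{A_i}|x_i\rangle_{D_i}$, exactly as in Eq.~\eqref{QAD}. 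Then I would recall that, after any number of consecutive measurements of a prepared or unprepared quantum system, the reduced state of every ancilla is diagonal in its own pointer basis, $\rho(A_i)=\sum_{x_i}q^{(i)}_{x_i}\,|x_i\rangle\langle x_i|$, with $q^{(i)}$ given by~\eqref{diagonalize-prepared-general-normalization} (or its unprepared analogue); this is immediate from the Schmidt structure of~\eqref{eqn:PreparedFullWavefunction} and~\eqref{eqn:UnpreparedFullWavefunction} together with the unitarity relations~\eqref{eqn:unitarity}.

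Next I would adjoin $D_i$ in the state $|0\rangle$ to the global pure wavefunction and apply $U_{A_iD_i}$. Because the $A_i$ factor enters the global state only through the incoherent pointer ensemble $\{q^{(i)}_{x_i},|x_i\rangle\}$ --- any coherences having been removed by the traces over $Q$ and the remaining subsystems --- the copy operation merely correlates $D_i$ with that ensemble. Tracing out every system except $D_i$ then yields $\rho(D_i)=\sum_{x_i}q^{(i)}_{x_i}\,|x_i\rangle\langle x_i|$, which has the same eigenvalues $\{q^{(i)}_{x_i}\}$ as $\rho(A_i)$, so $S(D_i)=H[q^{(i)}]=S(A_i)$. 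I would also note, for completeness, that $U_{A_iD_i}$ leaves $\rho(A_i)$ itself invariant (a controlled unitary does not disturb the reduced state of a control that is diagonal in the control basis), and that the later measurements and amplifications act on disjoint tensor factors; hence the equality holds for every $i$ in the chain simultaneously and regardless of where amplification is applied.

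The only point that really needs care --- the ``main obstacle,'' such as it is --- is justifying that the amplification acts diagonally even when $A_i$ is simultaneously correlated, or entangled, with many other ancillae and detectors: one must argue that the only description of $A_i$ relevant to what $D_i$ copies is the decohered pointer ensemble. This reduces to the structural fact used throughout Sec.~\ref{sec:three}, namely that every measurement interaction entangles $Q$ with an ancilla only in that ancilla's measurement basis, so each $\rho(A_i)$ is unconditionally diagonal in the basis that $D_i$ subsequently copies; everything else is bookkeeping with~\eqref{eqn:unitarity}.
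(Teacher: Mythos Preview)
Your proposal is correct and follows essentially the same approach as the paper: both arguments rest on the observation that the controlled-copy unitary $U_{A_iD_i}$ acts in the pointer basis of $A_i$, in which $\rho(A_i)$ is already diagonal with distribution $q^{(i)}$, so $D_i$ inherits the same spectrum. The paper's version is simply more compressed---it writes down the joint state $\rho(A_iD_i)=\sum_{x_i}q^{(i)}_{x_i}\,|x_ix_i\rangle\langle x_ix_i|$ directly and reads off the equal marginals from the perfect correlation---whereas you compute each marginal separately and add the (correct but unnecessary for the lemma) remark that $\rho(A_i)$ is left invariant by the amplification.
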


 \begin{proof}
 Amplifying the $i$th ancilla $A_i$ in~\eqref{eqn:FullWavefunction} with a detector $D_i$ yields the joint density matrix for $A_i$ and $D_i$,
 \begin{equation}
        \rho(A_iD_i) = \sum_{x_i} q^{(i)}_{x_i} ~ |x_i x_i \ra\la x_i x_i |,
 \end{equation}
 where $q^{(i)}_{x_i}$ is the probability distribution for $A_i$, as defined in~\eqref{diagonalize-prepared-general-normalization}. The two subsystems are perfectly correlated so that the density matrix and marginal entropy of $A_i$ is equivalent to $D_i$: $S(D_i) = S(A_i) = H[q^{(i)}]$.
 \end{proof}
 
 In the remaining sections, we will use the shortened notation $S(A_i) = S(D_i) = S_i$ for the marginal entropies. Using Lemmas~\ref{lemma-Q} and~\ref{lemma-AiDi}, we are now ready to prove the first theorem regarding information about the quantum system.

 \begin{thm}\label{theorem-last-info}
 The information that the last device in a series of measurements has about the quantum system is reduced when the measurements are amplified. That is,
 \begin{equation}\label{eqn:reduced-info-Q}
        S(Q:D_n) \le S(Q:A_n),
 \end{equation}
 for $n$ consecutive measurements of a prepared quantum state, $Q$.
 \end{thm}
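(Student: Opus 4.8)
The plan is to strip the mutual entropies down to the marginal Shannon entropies $S_i\equiv S(A_i)$ of the individual ancillae, and then close the argument with the doubly-stochastic structure of the transition probabilities $|U^{(n)}_{x_{n-1}x_n}|^2$. First I would expand both sides of~\eqref{eqn:reduced-info-Q} using the definition of the mutual entropy together with Lemmas~\ref{lemma-Q} and~\ref{lemma-AiDi}, which give $S(Q)=S(A_n)=S(D_n)=S_n$. Hence
\begin{equation}
 S(Q:A_n)=2S_n-S(QA_n),\qquad S(Q:D_n)=2S_n-S(QD_n),
\end{equation}
so that~\eqref{eqn:reduced-info-Q} is equivalent to $S(QD_n)\ge S(QA_n)$. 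Everything now reduces to evaluating these two joint entropies.

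For $S(QA_n)$ I would use that the global state $|\Psi\rangle=|QA_1\ldots A_n\rangle$ of~\eqref{eqn:FullWavefunction} is pure, so across the cut $(QA_n)$ versus $(A_1\ldots A_{n-1})$ one has $S(QA_n)=S(A_1\ldots A_{n-1})$; Theorem~\ref{theorem-cq-prepared} applied to the first $n-1$ ancillae then gives $S(A_1\ldots A_{n-1})=S(A_{n-1})=S_{n-1}$ (the case $n=1$ being immediate since $S(QA_1)=0$). Equivalently, tracing $A_1\ldots A_{n-1}$ out of $|\Psi\rangle\langle\Psi|$ directly, one finds that $\rho(QA_n)$ carries off-diagonal terms and is \emph{not} classical, yet is diagonalized by the vectors $|v_{x_{n-1}}\rangle=\sum_{x_n}U^{(n)}_{x_{n-1}x_n}\,|\wt x_n\,x_n\rangle$, which are orthonormal precisely by the unitarity~\eqref{eqn:unitarity} of $U^{(n)}$, with eigenvalues $q^{(n-1)}_{x_{n-1}}$. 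For $S(QD_n)$ I would amplify $A_n$ with $D_n$ and trace $A_1,\ldots,A_{n-1},A_n$ out of the resulting pure state; since $A_n$ --- the system that coherently entangled $Q$ with the earlier ancillae --- is now traced over, the result $\rho(QD_n)=\sum_{x_n}q^{(n)}_{x_n}\,|\wt x_n\,x_n\rangle\langle\wt x_n\,x_n|$ is manifestly diagonal, so $S(QD_n)=H[q^{(n)}]=S_n$ (whether or not the earlier ancillae are also amplified is immaterial, as in Sec.~\ref{sec:markov-prepared}, since they are traced out).

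Substituting back gives $S(Q:A_n)=2S_n-S_{n-1}$ and $S(Q:D_n)=S_n$, so the whole theorem collapses to the single inequality $S_n\ge S_{n-1}$. This is where the last ingredient enters: from~\eqref{diagonalize-prepared-general-normalization}, $q^{(n)}_{x_n}=\sum_{x_{n-1}}|U^{(n)}_{x_{n-1}x_n}|^2\,q^{(n-1)}_{x_{n-1}}$, and the matrix $M_{x_nx_{n-1}}=|U^{(n)}_{x_{n-1}x_n}|^2$ is doubly stochastic by the two unitarity relations~\eqref{eqn:unitarity}. A doubly-stochastic map can only mix a probability vector ($q^{(n)}\prec q^{(n-1)}$ in the majorization order), and since the Shannon entropy is Schur-concave this yields $H[q^{(n)}]\ge H[q^{(n-1)}]$, i.e. $S_n\ge S_{n-1}$, which finishes the proof.

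I expect the main obstacle to be the evaluation of $S(QA_n)$: one must avoid reading an entropy off the ``diagonal'' of the non-classical state $\rho(QA_n)$, and instead either invoke purity together with Theorem~\ref{theorem-cq-prepared} or carry out the explicit diagonalization in the $U^{(n)}$-rotated basis. The amplified side and the final majorization step are routine --- the latter being a standard Schur-concavity fact that the paper has not yet had to use but which is immediate.
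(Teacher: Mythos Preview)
Your argument is correct, but it is organized differently from the paper's proof. The paper reduces~\eqref{eqn:reduced-info-Q} to $S(QD_n)\ge S(QA_n)$ exactly as you do, but then closes immediately with a general coherence argument: $\rho(QD_n)$ is obtained from $\rho(QA_n)$ by complete dephasing, $\rho(QD_n)=\sum_{x_n}P_{x_n}\,\rho(QA_n)\,P_{x_n}$, and the relative entropy of coherence (equivalently, Klein's inequality) gives $S(\rho_{\rm diag})\ge S(\rho)$ for any state. No explicit diagonalization or computation of $S_{n-1}$, $S_n$ is needed inside the proof; those explicit values, and the consequence $S_n\ge S_{n-1}$, are derived by the paper only \emph{afterward}, as a corollary of the theorem rather than as its engine.

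Your route instead computes $S(QA_n)=S_{n-1}$ (via purity and Theorem~\ref{theorem-cq-prepared}, or equivalently the rotated-basis diagonalization you sketch) and $S(QD_n)=S_n$, and then supplies an independent proof of $S_n\ge S_{n-1}$ through the doubly-stochastic structure of $|U^{(n)}_{x_{n-1}x_n}|^2$ and Schur-concavity of the Shannon entropy. This is a legitimate and self-contained alternative; it has the advantage of yielding the explicit mutual entropies $S(Q:A_n)=2S_n-S_{n-1}$ and $S(Q:D_n)=S_n$ inside the proof, and of establishing the monotonicity $S_n\ge S_{n-1}$ directly rather than as a by-product. The paper's dephasing argument, on the other hand, is shorter and more robust: it does not depend on being able to diagonalize $\rho(QA_n)$ in closed form and would apply unchanged to any amplification modeled as a dephasing channel.
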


 \begin{proof} 
 We start with the state~\eqref{eqn:FullWavefunction} for an unamplified chain of consecutive measurements of a prepared quantum state, $Q$, with $n$ ancillae. Tracing out all previous ancilla states from~\eqref{eqn:FullWavefunction}, the joint density matrix for the quantum system and the last ancilla is
 \begin{equation} \label{eqn:QAn} 
        \rho(QA_n) = \!\!\!\!\!\!\sum_{x_{n-1}x_nx'_n}\!\!\!\!\! q^{(n-1)}_{x_{n-1}} \,  U^{(n)}_{x_{n-1}x_n} \, U^{(n)\,*}_{x_{n-1}x'_n} \, |\wt{x}_n \, x_n \ra\la \wt{x}'_n \, x'_n |,
 \end{equation} 
 where $q^{(n-1)}_{x_{n-1}}$ is $A_{n-1}$'s probability distribution.
 
 If we amplify the measurement chain (or, equivalently, just the last measurement) the state~\eqref{eqn:QAn} becomes diagonal. That is,
 \begin{equation} \label{eqn:QDn}
        \rho(QD_n) = \!\!\!\!\sum_{x_{n-1}x_n}\!\! q^{(n-1)}_{x_{n-1}} \,\,  |U^{(n)}_{x_{n-1}x_n}|^2  \, |\wt{x}_n \, x_n \ra\la \wt{x}_n \, x_n |.
 \end{equation}
 Note that the amplification is equivalent to a completely dephasing channel~\cite{lloyd1997,bennett1997,horodecki2005} since we can write
 \begin{equation}
        \rho(QD_n) = \sum_{x_n} P_{x_n} \,\, \rho(QA_n) \, P_{x_n},
 \end{equation}
 where $P_{x_n} = |x_n\ra\la x_n|$ are projectors on the state of $A_n$. In other words, $\rho(QD_n)$ is formed from the diagonal elements of $\rho(QA_n)$.

 To show that the amplified mutual entropy is reduced as in Eq.~\eqref{eqn:reduced-info-Q}, it is sufficient to show that the joint entropy is increased. The mutual entropy for two subsystems is defined~\cite{CerfAdami1998} as $S(Q:A_n) = S(Q) + S(A_n) - S(QA_n)$ and similarly for $S(Q:D_n)$. Since, by Lemma~\ref{lemma-AiDi}, the marginal entropies are unchanged by the amplification, $S(A_n) = S(D_n)$, we have
 \begin{equation} 
        S(Q:D_n) = S(Q:A_n) + S(QA_n) - S(QD_n). 
 \end{equation}
 Therefore, we just need to show that $S(QD_n) \ge S(QA_n)$, which is easiest by considering the relative entropy of coherence~\cite{baumgratz2014,xi2015}. This quantity, $C_{\rm rel.ent.}(\rho) = S(\rho_{\rm diag})- S(\rho)$, is the difference between the entropies of a density matrix $\rho$ and a matrix $\rho_{\rm diag}$ that is formed from the diagonal elements of $\rho$. It is derived by minimizing the relative entropy $S(\rho \,\Vert\, \delta) = {\rm Tr}(\rho \log \rho - \rho \log \delta)$ over the set of incoherent matrices $\delta$. By Klein's inequality, the relative entropy is non-negative so that $S(\rho_{\rm diag}) \ge S(\rho)$, with equality if and only if $\rho$ is an incoherent matrix. In our case, $\rho$ and $\rho_{\rm diag}$ are given by $\rho(Q:A_n)$ and $\rho(Q:D_n)$, respectively. Therefore, it follows that $S(QD_n) \ge S(QA_n)$ and
 \begin{equation}
     S(Q:D_n) \le S(Q:A_n),
 \end{equation} 
 with equality if and only if $\rho(QA_n)$ is already diagonal in the ancilla product basis.
 \end{proof}
 
 To directly compute the mutual entropies in Theorem~\ref{theorem-last-info}, we first diagonalize the density matrix~\eqref{eqn:QAn} with the orthonormal states $|\Phi_{x_{n-1}}\ra = \sum_{x_n} U^{(n)}_{x_{n-1}x_n} \, |\wt{x}_n \, x_n \ra$, so that
 \begin{equation}
        \rho(QA_n) = \sum_{x_{n-1}} q^{(n-1)}_{x_{n-1}} \,\, |\Phi_{x_{n-1}}\ra\la\Phi_{x_{n-1}}|.
 \end{equation}
 The joint entropy of this state is simply the marginal entropy of $A_{n-1}$. That is, $S(QA_n) = S(A_{n-1}) = S_{n-1}$, which can also be derived using the Schmidt decomposition and the results of Theorem~\ref{theorem-chain} (see Sec.~\ref{sec:info-past-measurements}). Thus, using Lemma~\ref{lemma-Q}, the information that the last ancilla has about the quantum system is
 \begin{equation}\label{info-last-ancilla}
        S(Q:A_n) = 2 S_n - S_{n-1}.
 \end{equation}

 If we now amplify the measurement chain (or, equivalently, just the last measurement) the information that $D_n$ has about $Q$ will be reduced from~\eqref{info-last-ancilla}. From Eq.~\eqref{eqn:QDn}, the joint density matrix of $Q$ and $D_n$ can also be written as
 \begin{equation} \label{eqn:QDn-v2}
        \rho(QD_n) = \sum_{x_n} q^{(n)}_{x_n} \, |\wt{x}_n \, x_n \ra\la \wt{x}_n \, x_n |,
 \end{equation} 
 which leads to $S(QD_n) = S(D_n) = S_n$. Therefore, amplifying the measurement reduces the quantity~\eqref{info-last-ancilla} to
 \begin{equation}\label{info-last-detector}
    S(Q:D_n) = S_n,
 \end{equation} 
 where we used Lemmas~\ref{lemma-Q} and~\ref{lemma-AiDi} to write $S(Q) = S(A_n) = S(D_n) = S_n$. This quantity depends explicitly on only the last measurement, unlike~\eqref{info-last-ancilla}, which depends on the last two. The amount of information that the last device has about the quantum system before amplification, \eqref{info-last-ancilla}, and after, \eqref{info-last-detector}, is related by
 \begin{equation} 
    S(Q:D_n) = S(Q:A_n) + S_{n-1} - S_n\;.
 \end{equation} 
 Thus, the marginal entropies in a chain of consecutive measurements never decrease, $S_n \ge S_{n-1}$, since $S(Q:D_n) \le S(Q:A_n)$. The entropy Venn diagrams for the devices $A_n$ and $D_n$ and the quantum system are shown in Fig.~\ref{fig:info-last-measurement}. 
 \begin{figure} 
    \centering
    \includegraphics[width=\linewidth]{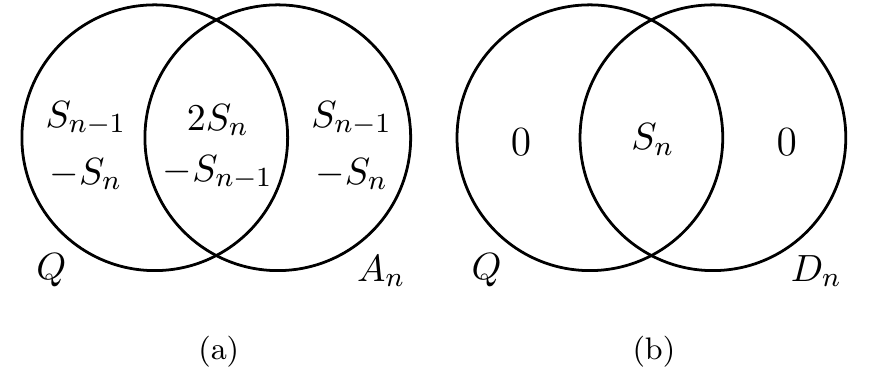} 
    \caption{The entropy Venn diagrams for (a) the quantum system and the unamplified ancilla according to Eq.~\eqref{eqn:QAn}, and (b) the quantum system and the amplifying detector according to Eq.~\eqref{eqn:QDn-v2}. The information that the last device has about the quantum system is reduced when the measurement is amplified. That is, $S(Q:D_n) \le S(Q:A_n)$.}
    \label{fig:info-last-measurement}
 \end{figure}

 We can illustrate this loss of information about the quantum system by considering consecutive qubit measurements. Suppose that ancilla $A_{n-1}$ measures $Q$ at an angle $\theta_{n-1} = 0$ relative to $A_{n-2}$ and that $A_n$ measures $Q$ at an angle $\theta_n = \pi/4$ relative to $A_{n-1}$. In this case, the marginal entropies are $S_{n-1} = S_{n-2} = H[q^{(n-2)}]$ and $S_n = 1$ bit. The last detector, $D_n$, has one bit of information about quantum system, which is less than that of the unamplified ancilla: $S(Q:A_n) = 2 - H[q^{(n-2)}] \ge 1$. Interestingly, how much we know about the state of $Q$ prior to amplification is controlled by the entropy of an ancilla, $A_{n-2}$, located two steps down the measurement chain.

 \subsection{Information about past measurements}
 \label{sec:info-past-measurements}
 We now calculate how much information is encoded in a measurement device about the state of the measurement device that just preceded it in the quantum chain. In particular, we will show that the shared entropy $S(A_n:A_{n-1})$ between the last two devices in the measurement chain is reduced by the amplification process so that $S(D_n:D_{n-1}) \le S(A_n:A_{n-1})$. These calculations have obvious relevance for the problem of quantum retrodiction~\cite{HilleryKoch2016}, but we do not here derive optimal protocols to achieve this. 
 
 \begin{thm}\label{theorem-past-info}
 The information that the last device has about the previous device is reduced when that measurement is amplified. That is,
 \begin{equation}
    S(D_n : D_{n-1}) \le S(A_n : A_{n-1}) \, .
 \end{equation}
 \end{thm}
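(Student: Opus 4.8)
The plan is to mirror the proof of Theorem~\ref{theorem-last-info}: recognize the amplification map as a completely dephasing (pinching) channel acting on the relevant joint state, note that it leaves all marginals fixed, and use the non-negativity of the relative entropy of coherence to conclude that the joint entropy can only increase, which forces the mutual entropy to decrease.

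First I would trace the pure state~\eqref{eqn:FullWavefunction} over the quantum system $Q$ and over the earlier ancillae $A_1,\ldots,A_{n-2}$. Since $Q$ is expressed in the eigenbasis of the last ancilla, the trace over $Q$ already makes the state diagonal in the $A_n$ label, and the trace over $A_1,\ldots,A_{n-2}$ identifies those indices, leaving coherence only in the $A_{n-1}$ label:
\begin{equation}
  \rho(A_{n-1}A_n)= \!\!\!\!\sum_{x_{n-1}x'_{n-1}x_n}\!\!\!\! c_{x_{n-1}x'_{n-1}}\,U^{(n)}_{x_{n-1}x_n}\,U^{(n)*}_{x'_{n-1}x_n}\,|x_{n-1}x_n\ra\la x'_{n-1}x_n| ,
\end{equation}
where $c_{x_{n-1}x'_{n-1}}$ is the Hermitian positive coefficient built from the amplitudes $\alpha^{(1)}$ and the rotations $U^{(2)},\ldots,U^{(n-1)}$, and whose diagonal is the probability distribution $q^{(n-1)}_{x_{n-1}}$ of Eq.~\eqref{diagonalize-prepared-general-normalization}.

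Amplifying the two measurements ($A_{n-1}\to A_{n-1}D_{n-1}$, $A_n\to A_nD_n$) and tracing out $Q$ together with all the $A$'s gives
\begin{equation}
  \rho(D_{n-1}D_n)=\sum_{x_{n-1}x_n}q^{(n-1)}_{x_{n-1}}\,|U^{(n)}_{x_{n-1}x_n}|^2\,|x_{n-1}x_n\ra\la x_{n-1}x_n| ,
\end{equation}
which is exactly $\rho(A_{n-1}A_n)$ with its off-diagonal entries in the ancilla product basis removed; equivalently, $\rho(D_{n-1}D_n)=\sum_{x_{n-1}}(P_{x_{n-1}}\!\otimes\mathbbm{1})\,\rho(A_{n-1}A_n)\,(P_{x_{n-1}}\!\otimes\mathbbm{1})$ with $P_{x_{n-1}}=|x_{n-1}\ra\la x_{n-1}|$, i.e.\ a dephasing on the $A_{n-1}$ factor (the $A_n$ factor being already diagonal). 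By Lemma~\ref{lemma-AiDi} the marginals are preserved, $S(D_{n-1})=S(A_{n-1})$ and $S(D_n)=S(A_n)$, so the definition of mutual entropy gives
\begin{equation}
  S(D_n:D_{n-1})=S(A_n:A_{n-1})+S(A_{n-1}A_n)-S(D_{n-1}D_n) ,
\end{equation}
and everything reduces to showing $S(D_{n-1}D_n)\ge S(A_{n-1}A_n)$. That is precisely the statement that the relative entropy of coherence $C_{\rm rel.ent.}(\rho(A_{n-1}A_n))=S(\rho_{\rm diag})-S(\rho)$ is non-negative (Klein's inequality; equivalently, a pinching is a unital channel and cannot lower the von Neumann entropy), with equality iff $\rho(A_{n-1}A_n)$ is already diagonal, i.e.\ iff $c_{x_{n-1}x'_{n-1}}$ is diagonal. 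The identical chain of steps applies to unprepared quantum states starting from the wavefunction~\eqref{eqn:UnpreparedFullWavefunction}.

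The main obstacle is the first step: one has to carry out the partial trace of~\eqref{eqn:FullWavefunction} carefully enough to see that the only surviving coherences in $\rho(A_{n-1}A_n)$ sit in the $A_{n-1}$ index, so that the amplified state really is the pinching of the unamplified one. Once that identification is in hand the inequality follows immediately from the relative-entropy-of-coherence argument already invoked in Theorem~\ref{theorem-last-info}, and no strong-subadditivity input is needed.
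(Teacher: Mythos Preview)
Your proposal is correct and follows essentially the same route as the paper: identify $\rho(A_{n-1}A_n)$ as diagonal in the $A_n$ label with coherence only in $A_{n-1}$, recognize amplification as the dephasing $\rho(D_{n-1}D_n)=\sum_{x_{n-1}}P_{x_{n-1}}\rho(A_{n-1}A_n)P_{x_{n-1}}$, invoke Lemma~\ref{lemma-AiDi} for the marginals, and conclude via the relative-entropy-of-coherence argument from Theorem~\ref{theorem-last-info}. The paper makes the coefficient explicit as $c_{x_{n-1}x'_{n-1}}=\sum_{x_{n-2}}q^{(n-2)}_{x_{n-2}}U^{(n-1)}_{x_{n-2}x_{n-1}}U^{(n-1)*}_{x_{n-2}x'_{n-1}}$, but otherwise the argument is the same.
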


 \begin{proof}
 From the wavefunction~\eqref{eqn:FullWavefunction}, the density matrix for the last two ancillae in the measurement chain is
 \begin{equation} 
 \begin{split} \label{eqn:last-two-un-amplified}
        \rho(A_{n-1}A_n) = \!\!\!\!\! & \sum_{\substack{  x_{n-2}  x_{n-1} \\ x'_{n-1} x_n }}  \!\!\! q^{(n-2)}_{x_{n-2}} \, U^{(n-1)}_{\!x_{n-2}x_{n-1}}  U^{(n-1)*}_{\!x_{n-2}x'_{n-1}} \\
           & ~~ \times U^{(n)}_{\!x_{n-1} x_n}  U^{(n)*}_{\!x'_{n-1} x_n} \, |x_{n-1} x_n\ra\la x'_{n-1} x_n |.
 \end{split}
 \end{equation} 
 Amplification removes the off-diagonals of $\rho(A_{n-1}A_n)$ so that
 \begin{equation} \label{eqn:last-two-amplified}
        \rho(D_{n-1}D_n) = \sum_{x_{n-1}} P_{x_{n-1}} \,\, \rho(A_{n-1}A_n) \, P_{x_{n-1}},
 \end{equation} 
 where $P_{x_{n-1}} = |x_{n-1}\ra\la x_{n-1}|$ are projectors on the state of $A_{n-1}$. Note that, from~\eqref{eqn:last-two-un-amplified}, it is sufficient to amplify just the second-to-last measurement with $A_{n-1}$. Since the marginal entropies are unchanged by the amplification (Lemma~\ref{lemma-AiDi}), the amount of information before amplification, $S(A_n:A_{n-1})$, and after, $S(D_n:D_{n-1})$, is related by
  \begin{equation} 
        S(D_n\!:\!D_{n-1}) \!=\! S(A_n\!:\!A_{n-1}) + S(A_{n-1}A_n) - S(D_{n-1}D_n). 
 \end{equation} 
 In a similar fashion to the calculations in Theorem~\ref{theorem-last-info}, it is evident from~\eqref{eqn:last-two-amplified} that the joint entropy is increased, $S(D_{n-1}D_n) \ge S(A_{n-1}A_n)$. It follows that the information that the last device has about the device that preceded it in the measurement sequence is reduced:
 \begin{equation}
      S(D_n : D_{n-1}) \le S(A_n : A_{n-1}),  
 \end{equation} 
 with equality if and only if $\rho(A_{n-1}A_n)$ is already diagonal in the ancilla product basis.
 \end{proof}

 Using the case of qubits, we can show how amplification reduces the amount of information about past measurements. In this example, suppose that the last two measurements in the chain are each made at the relative angle $\pi/4$. As expected, the amplified density matrix~\eqref{eqn:last-two-amplified} becomes uncorrelated, $\rho(D_{n-1}D_n) = \frac12 \, \mathbbm{1}_{D_{n-1}} \otimes \frac12 \, \mathbbm{1}_{D_n}$, where $\mathbbm{1}$ is the $2\times2$ identity matrix, and the shared entropy vanishes $S(D_n : D_{n-1}) = 0$. In other words, the last detector has no information about the detector preceding it. In contrast, prior to amplification the density  matrix~\eqref{eqn:last-two-un-amplified} is coherent with joint entropy $S(A_{n-1}A_n) = 1 + S_{n-2}$. 
 Therefore, the corresponding shared entropy is nonzero, $S(A_n : A_{n-1}) = 1 - S_{n-2} = 1 - H[q^{(n-2)}]$, revealing that information about the previous measurement survives the sequential $\pi/4$ measurements (as long as $A_{n-1}$ is not amplified).
 
 The calculations described above can be extended to include the information that the last device has about {\em all} previous devices in the measurement chain. We claim in Theorem~\ref{theorem-chain} that the amplification process reduces this information by a specific minimum (calculable) amount. To prove this statement, we make use of Theorem~\ref{theorem-cq-prepared}, where we showed that the joint entropy of all quantum ancillae that measured a prepared quantum system is simply equal to the entropy of last ancilla in the unamplified chain.

 \begin{thm}\label{theorem-chain}
 For $n$ consecutive measurements of a quantum system, the information that the last device has about all previous measurements is reduced by amplification by at least an amount $\Sigma_n$:
 \begin{equation} 
        S(D_n:D_{n-1} \ldots D_1) \le S(A_n:A_{n-1} \ldots A_1) - \Sigma_n\;,
 \end{equation} 
 where $\Sigma_n = S(A_{n-1}|A_n) \ge 0$ is a non-negative conditional entropy that quantifies the uncertainty about the prior measurement given the last. 
 \end{thm}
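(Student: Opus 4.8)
The plan is to reduce the theorem to the pairwise statement of Theorem~\ref{theorem-past-info} by exploiting the collapse of the multipartite mutual entropies supplied by Theorems~\ref{theorem-cq-prepared} and~\ref{theorem-markov}. First I would evaluate the two sides separately. Because the detectors $D_1,\ldots,D_n$ form a quantum Markov chain (Theorem~\ref{theorem-markov}), taking $j=n$ and $i=1$ gives $S(D_n|D_{n-1}\ldots D_1)=S(D_n|D_{n-1})$, and hence
\[ S(D_n:D_{n-1}\ldots D_1)=S(D_n)-S(D_n|D_{n-1}\ldots D_1)=S(D_n:D_{n-1}). \]
On the ancilla side, expanding $S(A_n:A_{n-1}\ldots A_1)=S(A_n)+S(A_1\ldots A_{n-1})-S(A_1\ldots A_n)$ and using $S(A_1\ldots A_n)=S(A_n)$ from Theorem~\ref{theorem-cq-prepared} leaves $S(A_n:A_{n-1}\ldots A_1)=S(A_1\ldots A_{n-1})$. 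Applying Theorem~\ref{theorem-cq-prepared} a second time to the $(n{-}1)$-ancilla subchain --- legitimate because the final von Neumann interaction couples only $Q$ and $A_n$ and therefore leaves the marginal $\rho(A_1\ldots A_{n-1})$ untouched --- gives $S(A_1\ldots A_{n-1})=S(A_{n-1})=S_{n-1}$.

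Next I would pin down $\Sigma_n$. By definition $\Sigma_n=S(A_{n-1}|A_n)=S(A_{n-1}A_n)-S_n$. Non-negativity follows because the reduced state $\rho(A_{n-1}A_n)$ in Eq.~\eqref{eqn:last-two-un-amplified} is classical on the $A_n$ factor (the $A_n$ index agrees in bra and ket), so it is a classical--quantum state $\rho(A_{n-1}A_n)=\sum_{x_n}p_{x_n}\,\sigma_{x_n}\otimes|x_n\ra\la x_n|$ with normalized blocks $\sigma_{x_n}$, whence $\Sigma_n=\sum_{x_n}p_{x_n}\,S(\sigma_{x_n})\ge 0$. A one-line computation then yields the bookkeeping identity
\[ S(A_n:A_{n-1}\ldots A_1)-\Sigma_n = S_n+S_{n-1}-S(A_{n-1}A_n)=S(A_n:A_{n-1}), \]
so the right-hand side of the theorem is exactly the mutual entropy of the last two ancillae.

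With these identities in hand the theorem collapses to $S(D_n:D_{n-1})\le S(A_n:A_{n-1})$, which is precisely Theorem~\ref{theorem-past-info}; chaining the equalities with this one inequality completes the proof, and $\Sigma_n=S(A_{n-1}A_n)-S_n$ supplies an explicit formula for the loss (for qubits with the last measurement at relative angle $\pi/4$ this gives $\Sigma_n=S_{n-2}$, recovering the example in Sec.~\ref{sec:info-past-measurements}).

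I do not anticipate a genuine obstacle, since the substantive work is already carried by Theorems~\ref{theorem-cq-prepared},~\ref{theorem-markov}, and~\ref{theorem-past-info}; the only point requiring care is the double application of Theorem~\ref{theorem-cq-prepared} --- once to the full $n$-ancilla chain and once to its $(n{-}1)$-ancilla truncation --- together with the elementary observation that a later measurement cannot disturb the joint state of earlier ancillae. If the unprepared case is also wanted, the same scheme goes through with Theorem~\ref{theorem-cq-unprepared} replacing Theorem~\ref{theorem-cq-prepared}, at the cost of carrying an extra boundary ancilla through the mutual-entropy algebra.
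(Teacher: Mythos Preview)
Your proposal is correct and follows essentially the same route as the paper: collapse the detector side to $S(D_n:D_{n-1})$ via the Markov property, collapse the ancilla side to $S_{n-1}$ via Theorem~\ref{theorem-cq-prepared} (applied, as you note, to both the $n$- and $(n{-}1)$-ancilla chains), identify $S(A_n:A_{n-1}\ldots A_1)-\Sigma_n$ with $S(A_n:A_{n-1})$, verify $\Sigma_n\ge 0$ from the classical--quantum block structure of $\rho(A_{n-1}A_n)$, and finish with Theorem~\ref{theorem-past-info}. The only cosmetic difference is the order of presentation; the substantive steps and their justifications match the paper's proof.
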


 \begin{proof}
 We begin by recognizing that the amplified mutual entropy $S(D_n:D_{n-1} \ldots D_1)$ for the full measurement chain is equal to $S(D_n:D_{n-1})$ by the Markov property (see Theorem~\ref{theorem-markov}). Then, by Theorem~\ref{theorem-past-info}, we can place an upper bound on the amplified information
 \begin{equation}\label{eqn:upper-bound}
     S(D_n:D_{n-1} \ldots D_1) = S(D_n:D_{n-1}) \le S(A_n:A_{n-1}),
 \end{equation}
 where $S(A_n:A_{n-1})$ is the mutual entropy before amplifying the measurement. Next, we will relate $S(A_n:A_{n-1})$ to $S(A_n:A_{n-1} \ldots A_1)$. From Theorem~\ref{theorem-cq-prepared}, the latter quantity can be written simply as
 \begin{equation}
     S(A_n:A_{n-1} \ldots A_1) = S_{n-1}, 
 \end{equation}
 so that with the definition of $S(A_n:A_{n-1})$, we come to
 \begin{equation}\label{eqn:eta}
    S(A_n:A_{n-1} \ldots A_1) = S(A_n:A_{n-1}) + \Sigma_n,
 \end{equation}
 where $\Sigma_n = S(A_{n-1}|A_n)$ represents the information gained by conditioning on all previous measurements. Inserting~\eqref{eqn:eta} into the inequality~\eqref{eqn:upper-bound}, we come to
 \begin{equation}
    S(D_n:D_{n-1} \ldots D_1) \le S(A_n:A_{n-1} \ldots A_1) - \Sigma_n.
 \end{equation}

 The information is reduced as long as $\Sigma_n \ge 0$. To show this, we recall the joint density matrix~\eqref{eqn:last-two-un-amplified} for $A_{n-1}$ and $A_n$. This state can be written as a classical-quantum state
 \begin{equation}
        \rho(A_{n-1}A_n) = \sum_{x_n} q^{(n)}_{x_n} \,\, \rho_{x_n} \otimes |x_n\ra\la x_n|,
 \end{equation}
 where 
 \begin{equation}
    q^{(n)}_{x_n} \,\, \rho_{x_n} \!= \!\!\sum_{x_{n-2}} q^{(n-2)}_{x_{n-2}} \,\, p^{(n-2,n)}_{x_{n-2}x_n} \,\, |\phi_{x_{n-2}x_n} \ra\la \phi_{x_{n-2}x_n}|,
 \end{equation}
 and the non-orthogonal states $|\phi_{x_{n-2}x_n} \ra$ were previously defined in Eq.~\eqref{diagonalize-unprepared-general}. In this block-diagonal form, the entropy is
 \begin{equation}
        S(A_{n-1}A_n) = S_n + \sum_{x_n} q^{(n)}_{x_n} \, S(\rho_{x_n}),
 \end{equation}
 so that the quantity of interest, $\Sigma_n$, can be written as
 \begin{equation}
     \Sigma_n = S(A_{n-1}|A_n) = \sum_{x_n} q^{(n)}_{x_n} \, S(\rho_{x_n}) \ge 0.
 \end{equation}
 This quantity is clearly non-negative since both $q^{(n)}_{x_n} \ge 0$ and $S(\rho_{x_n}) \ge 0 \,\, \forall \,\, x_n$. Therefore, with $\Sigma_n \ge 0$, we find that the information is indeed reduced by the amplification process, and by at least an amount equal to $\Sigma_n$.
 \end{proof}
 
 Continuing with our qubit example that followed Theorem~\ref{theorem-past-info}, if the last two measurements were each made at the relative angle $\pi/4$, the ancilla $A_n$ has 1 bit of information about the joint state of all previous ancillae. That is, $S(A_n:A_{n-1} \ldots A_1) = 1$ bit, while the amplifying detector $D_n$ has no information at all, $S(D_n:D_{n-1} \ldots D_1) = 0$.

 \begin{cor}\label{corollary-chain}
 Amplifying the measurement chain increases the entropy of the last device, when conditioned on all previous devices, by at least an amount $\Sigma_n$:
 \begin{equation}
    S(D_n|D_{n-1} \ldots D_1) \ge S(A_n|A_{n-1} \ldots A_1) + \Sigma_n.
 \end{equation}
 \end{cor}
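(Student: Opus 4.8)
The plan is to obtain this directly from Theorem~\ref{theorem-chain} by re-expressing both mutual entropies as a marginal entropy minus a conditional entropy. First I would use the identity $S(X:Y) = S(X) - S(X|Y)$ (which follows from the definitions of mutual and conditional entropy used in the paper) with $X = D_n$, $Y = D_{n-1}\ldots D_1$ on the left and with $X = A_n$, $Y = A_{n-1}\ldots A_1$ on the right. Substituting these into the inequality of Theorem~\ref{theorem-chain} turns it into
\begin{equation}
  S(D_n) - S(D_n|D_{n-1} \ldots D_1) \le S(A_n) - S(A_n|A_{n-1} \ldots A_1) - \Sigma_n.
\end{equation}

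Second, I would invoke Lemma~\ref{lemma-AiDi}, which gives $S(D_n) = S(A_n) = S_n$, so that the marginal terms on the two sides cancel exactly. Rearranging the surviving conditional-entropy terms then yields
\begin{equation}
  S(D_n|D_{n-1} \ldots D_1) \ge S(A_n|A_{n-1} \ldots A_1) + \Sigma_n,
\end{equation}
which is the claim. Since $\Sigma_n = S(A_{n-1}|A_n) = \sum_{x_n} q^{(n)}_{x_n}\, S(\rho_{x_n}) \ge 0$ was already established (in the proof of Theorem~\ref{theorem-chain}) to be a non-negative conditional entropy, the last outcome in an amplified chain is strictly more random—given all preceding outcomes—than in the unamplified chain whenever $\Sigma_n > 0$, i.e.\ whenever the retrodictive states $\rho_{x_n}$ of $A_{n-1}$ are mixed.

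Honestly there is no real obstacle here: all of the substantive content—the reduction of the amplified mutual information to a pairwise quantity via the Markov property (Theorem~\ref{theorem-markov}), the data-processing-type bound $S(D_n:D_{n-1})\le S(A_n:A_{n-1})$ (Theorem~\ref{theorem-past-info}), and the non-negativity of $\Sigma_n$—is already packaged inside Theorem~\ref{theorem-chain}, so the corollary is a one-line algebraic rearrangement. The only points requiring care are that the inequality direction flips when the conditional entropies are moved across (a smaller mutual entropy corresponds to a larger conditional entropy, because $S(X:Y)=S(X)-S(X|Y)$), and that the cancellation of $S(D_n)$ against $S(A_n)$ is precisely what Lemma~\ref{lemma-AiDi} licenses.
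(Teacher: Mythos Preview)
Your proposal is correct and essentially identical to the paper's own proof: both rewrite the mutual entropies in Theorem~\ref{theorem-chain} as $S_n$ minus the corresponding conditional entropy, cancel the marginal term (the paper does this implicitly by writing $S_n$ on both sides, while you cite Lemma~\ref{lemma-AiDi} explicitly), and rearrange the inequality.
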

 \begin{proof}
 By definition, the mutual entropy and conditional entropy are related by
 \begin{equation}
    S(D_n:D_{n-1} \ldots D_1) = S_n -  S(D_n|D_{n-1} \ldots D_1),    
 \end{equation}
 which, from Theorem~\ref{theorem-chain}, is bounded from above by $S(A_n:A_{n-1} \ldots A_1) - \Sigma_n = S_n -  S(A_n|A_{n-1} \ldots A_1) - \Sigma_n$. Therefore,
 \begin{equation}
    S(D_n|D_{n-1} \ldots D_1) \ge S(A_n|A_{n-1} \ldots A_1) + \Sigma_n,
 \end{equation}
 and the uncertainty in the last measurement is increased by at least an amount $\Sigma_n$.
 \end{proof}

\begin{figure} 
    \centering
    \includegraphics[width=\linewidth]{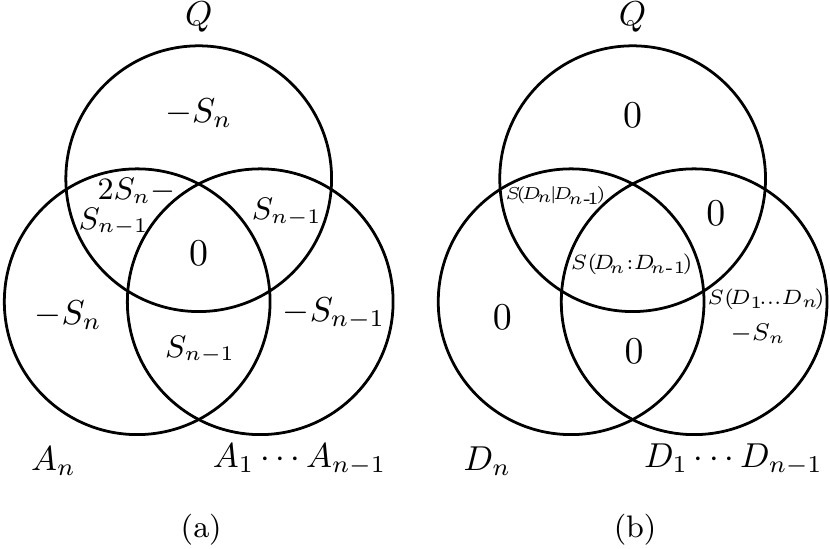} 
    \caption{Entropy Venn diagrams (a) before amplification with $n$ ancillae that consecutively measured a quantum system $Q$, and (b) after amplification with macroscopic detectors.} 
    \label{fig:venn-summary} 
\end{figure}

 This section quantified a number of unsurprising, but nevertheless important results: amplifying measurements reduces information, and increases uncertainty. The key quantity that characterizes the difference between unamplified and amplified chains is $\Sigma_n$, which quantifies how much we do {\em not} know about the state preparation, $A_{n-1}$, given the state determination, $A_n$. Depending on the relative state between $A_{n-1}$ and $A_n$, we may know nothing ($\Sigma_n=1$), or everything ($\Sigma_n=0$). We summarize the results presented in this section with the entropy Venn diagrams in Fig.~\ref{fig:venn-summary}.

\section{Applications of consecutive quantum measurements} 
\label{sec:six}
 The formalism developed in this paper can be directly applied to several interesting situations. Here, we focus specifically on the double-slit experiment, the quantum Zeno effect, and quantum state preparation.

 \subsection{The double-slit experiment}
 Suppose a photon in the state
 \begin{equation}
     |\Psi\ra = |h\ra_P \otimes |\psi\ra_Q,
 \end{equation}
 is incident on a double-slit apparatus. Initially, it has polarization (denoted $P$) degree of freedom $h$, and spatial (denoted $Q$) degree of freedom $\psi$. Once past the slits, its spatial state evolves to the superposition
 \begin{equation}
 \label{doubleslit-unentangled}
     |\Psi\ra = |h\ra_P \otimes \frac{|\psi_1\ra_Q + |\psi_2\ra_Q}{\sqrt2},
 \end{equation}
 where $|\psi_j\ra_Q$ is the state corresponding to the photon passing through slit $j$. The photon is then detected by a CCD camera $D_X$, which acts as an interference screen. This interaction can be modeled as a von Neumann measurement of the spatial states by the screen. Expanding the spatial states of the photon in terms of the position basis of the screen yields
 \begin{equation}
    |\psi_j\ra_Q = \sum_{x=1}^n \psi_j(x) \, |x\ra_Q,
 \end{equation} 
 where $j=0,1$ labels each slit. The states $|x\ra$ can be discretized into $n$ distinct locations according to 
 \begin{equation} 
 \begin{split} 
    |x=1\ra &= |100 \ldots 0\ra,\\
    |x=2\ra &= |010 \ldots 0\ra,\\
            &\vdots\\
    |x=n\ra &= |0 \ldots 001\ra, 
 \end{split} 
 \end{equation}
 which denote the location $x$ at which a photon is detected by $D_X$. Inserting this basis into the expression~\eqref{doubleslit-unentangled} and performing the measurement of $Q$ with $D_X$ (which starts in the initial state $|x=0\ra = |0 \ldots 0\ra$), we come to
 \begin{equation}
    |\Psi'\ra = |h\ra_P \otimes \sum_{x=1}^n \, \frac{ \psi_1(x) + \psi_2(x) }{\sqrt2} \,\, |xx\ra_{QD_X}.
 \end{equation}
 Tracing out the photon states, the density matrix describing the screen is
 \begin{equation}
     \rho(D_X) = \frac12 \sum_{x=1}^n \, \big| \psi_1(x) + \psi_2(x) \big|^2 \,\, |x\ra_{D_X\!}\la x|,
 \end{equation}
 where the probability to detect the photon at a position $x$ is a coherent superposition of probability amplitudes and leads to the standard double-slit interference pattern.
 
 We can extend this description to the case of multiple measurements in the context of the quantum eraser experiment. We first tag the photon's path in order to obtain information about through which slit it passed. In practice, we can implement the tagging operation by placing different wave plates in front of each slit. As a simple example, we assume the tagging takes the form of a controlled-\textsc{not} operation so that horizontal polarization, $h$, is converted to vertical polarization, $v$, if the photon traverses the second slit. Thus, instead of~\eqref{doubleslit-unentangled}, the polarization, $P$, and spatial, $Q$, degrees of freedom are now entangled,
 \begin{equation}
 \label{doubleslit-entangled}
     |\Psi\ra = \frac{|h\ra_P \otimes |\psi_1\ra_Q + |v\ra_P \otimes |\psi_2\ra_Q}{\sqrt2}.
 \end{equation}
 
 Of course, the entanglement in~\eqref{doubleslit-entangled} destroys the interference pattern on the screen. The fringes can be restored by measuring the photon's polarization with a detector $D_P$ in a rotated basis, before the photon hits the screen. Rewriting the polarization states in the new basis, $|0\ra$ and $|1\ra$, which are rotated by an angle $\theta$ with respect to $|h\ra$ and $|v\ra$ [see~\eqref{rotation}],
  \begin{align}
     |v\ra_P & = U_{00} \, |0\ra_P + U_{01} \, |1\ra_P, \\
     |h\ra_P & = U_{10} \, |0\ra_P + U_{11} \, |1\ra_P,
 \end{align}
 and measuring $P$ with the detector $D_P$ yields
 \begin{equation}
 \label{doubleslit-entangled-erased}
 \begin{split} 
     |\Psi'\ra & = \frac{1}{\sqrt2} \Big( U_{10} \, |\psi_1\ra_Q \!+ U_{00} \, |\psi_2\ra_Q \Big) \!\otimes |00\ra_{PD_P} \\
     & \,+ \frac{1}{\sqrt2} \Big( U_{11} \, |\psi_1\ra_Q \!+ U_{01} \, |\psi_2\ra_Q \Big) \!\otimes |11\ra_{PD_P}.
 \end{split}
 \end{equation}

 The angle at which we measure the polarization determines the coherence of the spatial states $Q$, which is reflected in the visibility of the recovered interference patterns. Repeating the measurement with the screen, \eqref{doubleslit-entangled-erased} becomes
 \begin{equation}
 \label{doubleslit-entangled-screen}
 \begin{split} 
     |\Psi''\ra & = \frac{1}{\sqrt2}\, \sum_{x=1}^n \!\Bigg[  \Big( U_{10}  \psi_1(x) \!+ U_{00}  \psi_2(x) \Big) |00\ra_{PD_P} \\
     & ~~~~~ + \!\Big( U_{11}  \psi_1(x) \!+ U_{01} \psi_2(x) \Big)  |11\ra_{PD_P}  \Bigg] \! \otimes |xx\ra_{QD_X}.
 \end{split}
 \end{equation}  
 The density matrix for the screen is, as expected, still completely mixed, and describes two intensity peaks on the screen. However, an interference pattern can be extracted if we condition on the outcome of the polarization measurement. That is,
 \begin{equation}
     \rho(QD_P) = \frac12 \, \sum_{i=0}^1 \,\, \rho_Q^{(i)} \otimes |i\ra_{D_P}\la i|,
 \end{equation}
 where
 \begin{equation}
     \rho_Q^{(i)} = \sum_{x=1}^n \, \big| U_{1i} \,  \psi_1(x) + U_{0i} \, \psi_2(x)  \big|^2 \,\, |x\ra_{D_X}\la x|,
 \end{equation}
 is the state of $Q$, given that the polarization measurement yielded the outcome $i$. The probability distribution of this state is a coherent sum of amplitudes and describes an interference pattern with a visibility that is controlled by the measurement angle, $\theta$. In particular, measuring at $\theta = 0$ leads to no interference, while $\theta = \pi/4$ recovers the standard fringe (or anti-fringe) pattern. We refer the reader to Ref.~\cite{GlickAdami-bell-2017} for a detailed information-theoretic analysis of the Bell-state quantum eraser experiment, where the degree of erasure is controlled by an entangled photon partner, even after the original photon has hit the screen.

 \subsection{Quantum Zeno and anti-Zeno effects} 
 \label{sec:zeno}
  In this section, we derive the standard results of the quantum Zeno and anti-Zeno effects in the context of unitary consecutive measurements. Instead of a time-varying quantum state controlled by quantum measurements in the same basis, we can equivalently study a static quantum state consecutively measured by quantum detectors whose basis changes in time.

 For the Zeno effect~\cite{HomeWhitaker1997,Peres1995}, we assume that an initial quantum two-state system is in the state $\rho(Q) = p |0\ra\la 0| + (1-p) |1\ra\la1|$, with arbitrary $p$, which was prepared by a measurement with detector $D_1$. It is then subsequently measured by detectors $D_2\ldots D_n$, each at an angle $\pi/(4n)$ relative to the previous detector, completing a full $\pi/4$ rotation after $n$ observations. The density matrix for the preparation with the first detector $D_1$
is $\rho(D_1)=p|0\ra\la0|+(1-p)|1\ra\la1|$, which has an entropy $S(D_1) = -p\log_2 p - (1-p)\log_2 (1-p)$. The density matrix for the second detector, expressed in a different basis that is rotated with the unitary matrix $U$, is
 \begin{equation} 
    \rho(D_2)=\sum_j\left(p\,|U_{0j}|^2+(1-p)\,|U_{1j}|^2\right)|j\ra\la j|\,,
 \end{equation}
 where the unitary matrix is given by
 \begin{equation}
    U = \left(\begin{array}{cc}\cos(\frac\pi{4n})&-\sin(\frac\pi{4n})\\
         \sin(\frac\pi{4n})& ~~ \cos(\frac\pi{4n})\end{array}\right)\,.
 \end{equation}
 The entropy of the second detector is $S(D_2) = - q \log_2 q - (1-q)\log_2 (1-q)$ with $q=1/2+(p-1/2)\cos\left(\frac{\pi}{2n}\right)$, the probability to observe the state $|0\ra$ for the second measurement. Figure~\ref{fig:zeno} shows the detector entropies $S(D_1)$, $S(D_2)$, and $S(D_3)$ for measurements $D_2$ and $D_3$  after the preparation with $D_1$. 
 \begin{figure} 
    \centering
    \includegraphics[width=\linewidth]{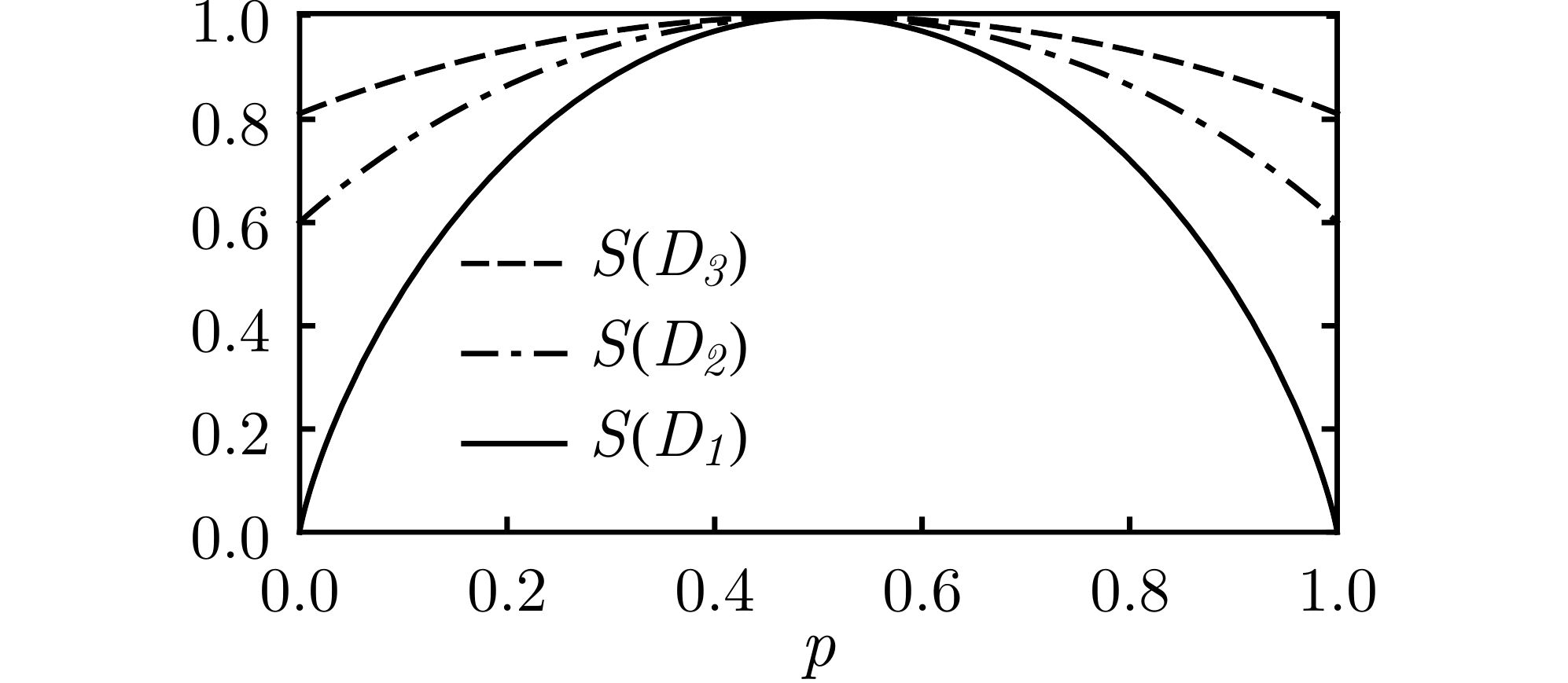} 
    \caption{The detector entropies for two consecutive measurements $D_2$ and $D_3$, after the preparation with $D_1$, of the quantum state $\rho(Q) = p \, |0\ra\la 0| + (1-p) \, |1\ra\la1|$. Each detector is at an angle of $\pi/8$ relative to the previous detector.}
    \label{fig:zeno}
 \end{figure}

 In general, following the preparation, the probability $q^{(n)}$ to observe the state $|0\ra$ after $n$ measurements is
 \begin{equation}
    q^{(n)}=\frac12+\left(p-\frac12\right)\cos^{\,n}\!\left(\frac{\pi}{2n}\right)\to p\ \ {\rm as}\ n\to\infty\;.
 \end{equation}
 In other words, the density matrix of the $n$th detector is equal to that of the preparation with $D_1$. For polarization measurements for example, this results in perfect transmission of the initially polarized beam even though the $n$ detectors rotate the plane of polarization by 45 degrees~\cite{Kofmanetal2001}.

 The anti-Zeno effect is often described as the complete destruction of a quantum state due to incoherent consecutive measurements~\cite{KaulakysGontis1997,LewensteinRzazewski2000,Luis2003}. In the present language, this corresponds to the randomization of a given (prepared) quantum state after consecutive measurements at random angles with respect to the initial state. We begin again with the prepared state $\rho(Q)=p|0\ra\la0|+(1-p)|1\ra\la1|$, but now observe it consecutively using measurement devices $D_k$ at angles $\theta_k$ drawn from a uniform distribution on the interval $[0,\pi/4]$. The probability to observe $Q$ in state $|0\ra$ after $n$ measurements with random phases is now
 \begin{equation}
    q^{(n)}=\frac12+\left(p-\frac12\right)\Pi_{k=1}^n\cos(2\theta_k)\;.
 \end{equation}
 In order to obtain the most likely state probability for random dephasing, we calculate the expectation value,
 \begin{equation}
    E\left[\Pi_{k=1}^n\cos(2\theta_k)\right]=\Pi_{k=1}^n E\left[\cos(2\theta_k)\right]=\left(\frac2\pi\right)^{\!n} \!,
 \end{equation}
 so that $E\left[q^{(n)}\right] \to 1/2$ as $n\to \infty$. Thus, any quantum state is randomized via consecutive quantum projective measurements in random bases. A similar result was derived for the dephasing of photon polarization in Ref.~\cite{Kofmanetal2001}.

 \subsection{Preparing quantum states}
 For our final application, we discuss how to prepare quantum states by considering consecutive measurements on unprepared quantum states. Suppose a quantum system is prepared in the known state
 \begin{equation} \label{prep}
    \rho(Q)=\sum_{x=1}^d p_x \, |\wt{x}\ra\la \wt{x}|\;,
 \end{equation}
 which we already wrote in the basis of the ancilla that will perform the first measurement after the preparation. We can always prepare a state like~\eqref{prep} by measuring an unprepared quantum state~\eqref{QR}, with the pair $A_1D_1$ in a given, but arbitrary basis. Then, a second measurement with $A_2D_2$ at a relative angle $\theta_2$ gives rise to the state
 \begin{equation}
     |QRA_1D_1A_2D_2\ra = \frac{1}{\sqrt d} \sum_{x_1 x_2} U^{(2)}_{x_1 x_2} ~ |\wt{x}_2 \,\, x_1 \, x_1 x_1 \, x_2 x_2 \ra.
 \end{equation}
 From this we can compute the operator~\cite{CerfAdami1998} describing the state of the quantum system, conditional on the state of the first detector, $D_1$,
 \begin{equation} 
   \begin{split} 
    \rho(Q|D_1) & = \rho(Q D_1) \, \Big(\rho(D_1)^{-1} \otimes \mathbbm{1}_Q \Big) \\
       & = \sum_{x_1} \rho^{x_1}_Q \otimes |x_1\ra\la x_1|\,,
  \end{split}
 \end{equation}
 where $\rho(D_1)^{-1}$ is the inverse of the density matrix. The density matrix $\rho^{x_1}_Q$ is the prepared state~\eqref{prep} of the quantum system, given that the outcome $x_1$ was observed in the first measurement,
 \begin{equation} \label{Qprojected}
    \rho^{x_1}_Q = \frac{{\rm Tr}_{D_1} \! \left[P_{x_1}\rho(QD_1)P_{x_1}^\dagger\right]}{ {\rm Tr}_{QD_1} \!\! \left[P_{x_1}\rho(QD_1)P_{x_1}^\dagger\right] }  =  \sum_{x_2}  |U^{(2)}_{x_1 x_2}|^2 \,  |\wt{x}_2\ra\la \wt{x}_2|.
 \end{equation}
 Here, $P_{x_1} = |x_1\ra\la x_1|$ are projectors on the state of detector $D_1$. If we choose for the quantum state preparation the outcome $x_1=0$, for example, then $p_{x_2}=|U^{(2)}_{0x_2}|^2$ provides the probability distribution for the quantum system, and we arrive at the desired prepared state~\eqref{prep} from~\eqref{Qprojected}. 
 
 The purification of~\eqref{prep} in terms of the basis of ancilla $A_2$ is 
 \begin{equation}\label{QD2}
    |QA_2\ra=\sum_{x_2}\sqrt{p_{x_2}} \,\, |\wt{x}_2\ra |x_2\ra, 
 \end{equation}
 which is an entangled state with the marginal entropies $S(Q) \!=\! S(A_2) \!=\! H[p]$. If we rename $A_2$ to $A_1$, then expression~\eqref{QD2} is equivalent to~\eqref{QA_stageI}. Equipped with this state preparation, we can now make the usual consecutive (amplified or unamplified) measurements of $Q$ with $A_2D_2, \, A_3D_3,$ etc.

 \section{Conclusions}
 \label{sec:seven}
 Conventional wisdom in quantum mechanics dictates that the measurement process ``collapses'' the state of a quantum system so that the probability that a particular detector fires depends only on the state preparation and the measurement chosen. This assertion can be tested by considering sequences of measurements of the same quantum system. If a ``memory" of the first measurement (the state preparation) persists beyond the second measurement, then a reduction of the wave packet can be ruled out. We discussed two classes of quantum measurement: those performed within a closed system where every part of a measurement device (every qudit of the pointer) is under control, and those performed within an open system, where part of the pointer variable is ignored. We found that sequences of quantum measurements in closed systems are non-Markovian (retaining the memory of past measurements) while sequences of open-system measurements obey the Markov property. In the latter case, the probability distribution of future measurement results only depends on the state preparation and the measurement chosen. It is clear from our construction that the Markovian measurements are a special case of the non-Markovian ones, and that the loss of memory is not a fundamental property of quantum measurements, but is merely a consequence of the loss of quantum information when tracing over degrees of freedom that participated in the measurement. We quantified this loss by calculating the amount of information lost when observing coherent quantum detectors using incoherent devices. 
 
 We have found that the entropy of coherent chains of measurements is entirely determined by the entropy at the boundary of the chain, namely the entropy of the state {\em preparation} (the first measurement in the chain) and the last measurement. (If the chain is started on a known state, then the entropy of the chain is contained in the last measurement only). This property is a direct consequence of the unitarity of quantum measurements, and signifies that any quantum measurement outcome is constrained by its immediate past and its immediate future. It has not escaped our attention that this property of quantum chains is reminiscent of the holographic principle, which posits that the description of a system can be encoded entirely on its boundary alone. Because the holographic principle is often thought to have its origin in an information-theoretic description of space-time~\cite{Wheeler1990}, it is perhaps not surprising that an information-theoretic analysis of chains of measurements would yield precisely such an outcome. In particular, it is not too hard to imagine that the past-future relationship that consecutive quantum measurements entail create precisely the partial order required for the ``causal sets" program for quantum gravity~\cite{Bombellietal1987}. Of course, to recover space-time from sets of measurements we would need to consider not just sequential measurements on the same system, but multiple parallel chains that are entangled with each other, creating a network rather than a chain (we have recently shown that the unitary formalism deployed here can be extended to parallel measurements when discussing the Bell-state quantum eraser~\cite{GlickAdami-bell-2017}). In that respect, the network of quantum measurements is more akin to van Raamsdonk's~\cite{VanRaamsdonk2010} tensor networks, created using entangling and disentangling operations (see also~\cite{Pastawskietal2015}). Incidentally, the present formalism implies the existence of a disentangling operation that ``undoes" quantum measurements, and that can serve as a powerful primitive for controlling quantum entanglement~\cite{GlickAdami2017}.
 
 Using a quantum-information-theoretic approach, we have argued that a collapse picture makes predictions that differ from those of the unitary (relative state) approach if multiple consecutive non-Markovian measurements are considered. Should future experiments corroborate the manifestly unitary formulation we have outlined, such results would further support the notion of the reality of the quantum state~\cite{Pusey2012} and that the wavefunction is not merely a bookkeeping device that summarizes an observer's knowledge about the system~\cite{Englert2013,Fuchsetal2014}. We hope that moving discussions about the nature of quantum reality from philosophy into the empirical realm will ultimately lead to a more complete (and satisfying) understanding of quantum physics.

 \begin{acknowledgments}
   CA would like to thank Jeff Lundeen and his group for discussions. Financial support by a Michigan State University fellowship to JRG is gratefully acknowledged.
 \end{acknowledgments}

 
 \bibliography{quant}
 
\end{document}